\documentclass[sigconf]{acmart}

\makeatletter
\def\@copyrightspace{\relax}
\makeatother
\settopmatter{printacmref=false} 
\renewcommand\footnotetextcopyrightpermission[1]{}

\AtBeginDocument{%
  \providecommand\BibTeX{{%
    \normalfont B\kern-0.5em{\scshape i\kern-0.25em b}\kern-0.8em\TeX}}}


\acmBooktitle{arXiv Preprint 2021}
\acmConference[arXiv Preprint]{arXiv Preprint}{2021}{}
\acmPrice{15.00}
\acmISBN{978-1-4503-XXXX-X/18/06}

\usepackage{stmaryrd}
\newcommand{\ldqm}[1]{ \pmb{\left\llbracket\vphantom{#1}\right.}  #1   \pmb{\left.\vphantom{#1}\right\rrbracket} }

\usepackage{booktabs}

\usepackage[inline]{enumitem}

\usepackage{float}
\usepackage{listings}

\newfloat{lstfloat}{htbp}{lop}
\floatname{lstfloat}{Listing}

\usepackage[ruled,vlined]{algorithm2e}

\usepackage[capitalise]{cleveref}
\Crefname{definition}{Def.}{Defs.}

\usepackage{siunitx}

\usepackage{hyperref}

\usepackage{rotating}

\usepackage{amsmath}
\DeclareMathOperator*{\argmin}{arg\,min}
\DeclareMathOperator*{\argmax}{arg\,max}

\usepackage{cases}

\usepackage{subfig}

\usepackage{graphicx}
\usepackage{tikz}

\usepackage{booktabs}
\usepackage{multirow}

\DeclareFontEncoding{LS1}{}{}
\DeclareFontSubstitution{LS1}{stix}{m}{n}
\DeclareSymbolFont{STIXsymbols}{LS1}{stixscr}{m}{n}
\DeclareMathSymbol{\Bowtie}{\mathrel}{STIXsymbols}{"0E}

\newcommand{\fedi}{\textsc{Fed-I}\xspace}
\newcommand{\fedii}{\textsc{Fed-II}\xspace}



\begin{document}

\title{A Framework for Federated SPARQL Query Processing over Heterogeneous Linked Data Fragments}

\author{Lars Heling}
\email{lars.heling@kit.edu}
\affiliation{%
  \institution{Karlsruhe Institute of Technology}
  \streetaddress{Kaiserstr. 12}
  \city{Karlsruhe}
  \state{Germany}
}

\author{Maribel Acosta}
\email{maribel.acosta@rub.de}
\affiliation{%
  \institution{Ruhr-University Bochum}
  \streetaddress{Universitätsstraße 150}
  \city{Bochum}
  \state{Germany}
}

\renewcommand{\shortauthors}{Heling and Acosta}

\begin{abstract}
Linked Data Fragments (LDFs) refer to Web interfaces that allow for accessing and querying Knowledge Graphs on the Web.
These interfaces, such as SPARQL endpoints or Triple Pattern Fragment servers, differ in the SPARQL expressions they can evaluate and the metadata they provide.
Client-side query processing approaches have been proposed and are tailored to evaluate queries over individual interfaces.
Moreover, federated query processing has focused on federations with a single type of LDF interface, typically SPARQL endpoints.
In this work, we address the challenges of SPARQL query processing over federations with heterogeneous LDF interfaces.
To this end, we formalize the concept of federations of Linked Data Fragment and propose a framework for federated querying over heterogeneous federations with different LDF interfaces.
The framework comprises query decomposition, query planning, and physical operators adapted to the particularities of different LDF interfaces.
Further, we propose an approach for each component of our framework and evaluate them in an experimental study on the well-known FedBench benchmark.
The results show a substantial improvement in performance achieved by devising these interface-aware approaches exploiting the capabilities of heterogeneous interfaces in federations.

\end{abstract}

\maketitle

\section{Introduction}
\label{sec:introduction}
The increasing number and size of Knowledge Graphs published as Linked Data led to the development of different interfaces to support querying Knowledge Graphs on the Web~\cite{DBLP:journals/ws/VerborghSHHVMHC16,DBLP:journals/corr/HartigA16,DBLP:conf/www/MinierSM19,DBLP:conf/www/AzzamFABP20}.
These interfaces mainly differ in their expressivity, server availability, and client cost as shown in \Cref{fig:ldf_spectrum}.
The Linked Data Fragment (LDF) framework provides a uniform way to describe these interfaces regarding their querying expressivity and the metadata they provide~\cite{DBLP:journals/ws/VerborghSHHVMHC16,DBLP:conf/semweb/HartigLP17}.
The query expressivity of these interfaces ranges from triple patterns in Triple Pattern Fragment (TPF) servers to the full fragment of SPARQL in SPARQL endpoints.
%
To support efficient querying, these developments also drove the research in the area of client-side SPARQL query processing tailored to the individual interfaces~\cite{DBLP:conf/semweb/TaelmanHSV18,DBLP:conf/semweb/AcostaV15,DBLP:conf/semweb/MontoyaAH18a}.
%
Therefore, most of the existing approaches focus on querying data from a single dataset or through a federation of sources with the same interfaces. 
However, the problem of evaluating SPARQL queries over heterogeneous federations of such LDF interfaces has not gained much attention thus far. 
To devise efficient querying plans in this scenario, it is not sufficient to combine and fuse existing solutions because the capabilities and limitations of the interfaces have to be taken into account altogether. 
An effective solution, thus, requires to re-define the notions of the main tasks of federated engines -- i.e query decomposition, planning, and execution -- to integrate the different capabilities of the interfaces in a single query plan.    

\begin{figure}[t]
    \centering
    \includegraphics[width=0.95\linewidth]{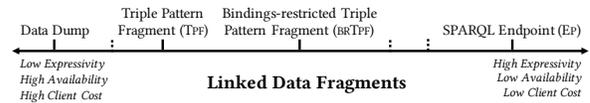}
    \caption{Linked Data Fragment Spectrum (based on \cite{DBLP:journals/ws/VerborghSHHVMHC16}).}
    \label{fig:ldf_spectrum}
\end{figure}

In this work, we formalize federations of LDF interfaces and propose a framework that serves as a  foundation for devising efficient solutions for querying heterogeneous federations.  
At the core of the framework, we present novel concepts for query decomposition, planning, and execution in heterogeneous federations. 
These components include the definitions of (i) interface-compliant sub-expressions and the semantics of their evaluation, (ii) interface-aware query planning, and (iii) polymorphic physical operators that implement different execution strategies according to the contacted interfaces.  
To accompany our theoretical contributions, we propose simple yet novel approaches to query heterogeneous LDF federations.
Each approach addresses the particularities of the interfaces and is designed to reduce the query execution times and the load on the members of the federations. 
Our results show the effectiveness of our framework and illustrate how leveraging the interfaces' capabilities in a single plan can substantially improve query execution. 
%
In summary, the contributions of this work are
\begin{itemize}[leftmargin=.8\parindent]
    \item a general definition of Linked Data Fragment (LDF) federations,
    \item a framework for querying heterogeneous LDF federations addressing query decomposition, planning, and physical operators,
    \item a practical solution for each component of the framework, and 
    \item an experimental evaluation of a prototypical implementation of the solutions on heterogeneous federations.
\end{itemize}
%
The remainder of this work is structured as follows. 
\cref{sec:motivating_example} presents a  motivating example, and in ~\cref{sec:ldf_federations}, we present our definition of federations of LDFs. 
Our framework is presented in \cref{sec:approach} and evaluated in \cref{sec:experimental_evaluation}.
We discuss related work in \cref{sec:related_work} and conclude our work in \cref{sec:conclusion}.
\vspace{3.5em}




\section{Motivating Example}
\label{sec:motivating_example}
As a motivating example, consider the query to \textit{get American Presidents, the political party they are a member of as well as their predecessors and successors} shown in \cref{lst:query}.
\vspace{-0.5\baselineskip}
\begin{lstlisting}[
%float=b,
basicstyle=\sffamily\footnotesize,
language=SPARQL, 
mathescape=true,
morekeywords={}, commentstyle=\color{gray},
firstnumber=0, 
%numbers=left,
numbersep=5pt,
numberstyle=\scriptsize,
frame=tb,
%caption = {A Query to get American presidents, their political party, predecessors, and successors. Prefixes as in \url{http://prefix.cc}.}, 
caption = {Example query. Prefixes as in \url{http://prefix.cc}},
 label={lst:query},
 escapechar=",
columns=flexible,
numberfirstline=false
]
SELECT * WHERE {
    ?x wdt:P39 wd:Q11696 .                  # $tp_1$
    ?x wdt:P102 ?party .                    # $tp_2$
    ?y owl:sameAs ?x .                      # $tp_3$
    ?y dbo:predecessor ?predecessor .       # $tp_4$
    ?y dbo:successor ?successor .   }       # $tp_5$
\end{lstlisting}
Let us assume, we want to evaluate the query over a federation that consists of the SPARQL endpoint of Wikidata\footnote{https://query.wikidata.org/sparql\label{fn:wd_sparql}} and the Triple Pattern Fragment (TPF) server \cite{DBLP:journals/ws/VerborghSHHVMHC16} of DBpedia\footnote{\url{http://fragments.dbpedia.org/2016-04/en}\label{fn:dbp_tpf}}.
The Wikidata endpoint provides solutions to triple patterns $tp_1, tp_2, tp_3$ and the DBpedia TPF server to $tp_3$, $tp_4$, and $tp_5$.
As the members of the federation implement different Linked Data Fragment (LDF) interfaces, we denote such a federation \textit{heterogeneous}.
In this example, we are not able to apply an existing query decomposition approach from query processing over SPARQL endpoint federations, as these approaches do not consider the capabilities of the LDF interfaces.
For instance, FedX \cite{DBLP:conf/semweb/SchwarteHHSS11} would group triple patterns $tp_4$ and $tp_5$ into a subquery which is not \emph{compliant} with the DBpedia TPF interface.
On the contrary, a naive decomposition that evaluates the query on the triple pattern level at the relevant sources would be \emph{compliant} with the interfaces in the federation, since they are all able to evaluate triple patterns.
However, such a decomposition leads to inefficient query plans on SPARQL endpoints as they produce an excessive number of requests on the server and thus, lead to long execution times.
In our example, this approach would fail to evaluate the subexpression $(tp_1\; \textsc{And}\; tp_2 )$ at the Wikidata endpoint.
Their individual evaluation at the endpoint leads to an overhead in requests and intermediate results transferred that could be avoided.
The example illustrates the challenges that arise in heterogeneous federations and motivates our research to address those challenges.  
In this work, we propose a framework that is tailored to leverage the capabilities of the different interfaces in heterogeneous federations.
Based on this framework, our implementation reduces the number of requests by almost 25\% leading to a tenfold decrease in query execution time over the naive approach for the example query.

\section{Federations of Linked Data Fragment Services}
\label{sec:ldf_federations}
Following existing works~\cite{DBLP:journals/ws/VerborghSHHVMHC16,DBLP:conf/semweb/HartigLP17}, we introduce a formalization of Linked Data Fragment (LDF) interfaces based on the SPARQL  expressions they are able to evaluate and the metadata they provide.
%
Verborgh et al. \cite{DBLP:journals/ws/VerborghSHHVMHC16} define a Linked Data Fragment (LDF) for an RDF graph $G$ as a tuple consisting of a URI, a selector function, a set of RDF triples that are the result of applying the selector function over $G$, metadata in the form of a set of RDF triples, and a set of hypermedia controls.
%
Based on this work, Hartig et al. \cite{DBLP:conf/semweb/HartigLP17} propose a formal framework for comparing LDF interfaces in terms of expressiveness, complexity, and performance when evaluating SPARQL queries over different interfaces.
The concept of LDF interfaces by  Hartig et al. comprises the following:
\begin{enumerate*}[label=\roman*)]
    \item a notion of a \emph{server language} to differentiate between different capabilities of LDF interfaces, and
    \item an evaluation function in which an LDF interface provides a set of SPARQL solution mappings upon a request.  
\end{enumerate*}
We adapt the server language definition from \cite{DBLP:conf/semweb/HartigLP17} to be based on the SPARQL expressions an LDF interface can evaluate. 
Therefore, we first revise the SPARQL expressions considered in the literature.  

Let the sets of RDF terms $U$, $B$, and $L$ be pairwise disjoint sets of URIs, blank nodes, and Literals, and $V$ be a set of variables disjoint from $U$, $B$, and $L$.
A triple $(s, p, o) \in (U \cup B) \times U \times (U \cup B \cup L)$ is called an RDF triple. A \emph{set} of RDF triples is an RDF graph $G$ and the universe of RDF graphs is denoted as $\mathcal{G}$. 
Following the notation by Peréz et al. \cite{DBLP:journals/tods/PerezAG09} and Schmidt et al. \cite{DBLP:conf/icdt/Schmidt0L10}, SPARQL expressions are constructed using the operators \textsc{And}, \textsc{Union}, \textsc{Optional}, \textsc{Filter}, and \textsc{Values} and can be defined recursively as follows.

\begin{definition}[SPARQL Expression]
\label{def:sparql_expression}
A SPARQL expression is an expression that is recursively defined as follows.
\begin{enumerate}
    \item A triple pattern $tp \in (U \cup V ) \times (U \cup V) \times (U \cup L \cup V)$ is a SPARQL expression ~\cite{DBLP:journals/tods/PerezAG09},
    \item if $P_1$ and $P_2$ are SPARQL expressions, then the expressions $(P_1\; \textsc{And}\; P_2 )$, $(P_1\; \textsc{Union}\; P_2 )$ and $(P_1\; \textsc{Optional}\; P_2 )$ are SPARQL expressions (\textit{conjunctive expression}, \textit{union expression}, \textit{optional expression})~\cite{DBLP:journals/tods/PerezAG09},
    \item if $P$ is a SPARQL expression and $R$ is a SPARQL filter condition, then the expression $P\ \textsc{Filter}\; R$ is a SPARQL expression (\textit{filter expression})~\cite{DBLP:journals/tods/PerezAG09}, 
    \item if $P$ is a SPARQL expression and $D$ is a SPARQL values datablock, the expression $P\ \textsc{Values}\; D$ is a SPARQL expression (\textit{values expression}), and
    \item if $P$ is SPARQL expression and $S \subset V$ is a set of variables, the expression $\textsc{Select}_S (P)$  is  an expression (\textit{select query})~\cite{DBLP:conf/icdt/Schmidt0L10}. 
\end{enumerate}
\end{definition}

Furthermore, we denote the universe of SPARQL expressions as $\mathcal{P}$ and $vars(P) \subset V$ as the set of variables in the expression $P$.
We can define the interface languages of different LDF interfaces by means of the fragment of SPARQL expressions they can evaluate.

\begin{definition}[Interface Language]
Let $\mathcal{L}$ be the universe of interface languages, an interface language $L \in \mathcal{L}$ is the fragment of SPARQL expressions that an interface can evaluate. 
\end{definition}

Moreover, we denote $P \in L$ if a SPARQL expression $P$ is a SPARQL expression that is part of an interface language $L$.
Common interface languages can, thus, be defined in the following way.
\begin{itemize}
    \item $L_{\textsc{CoreSparql}}$: Any SPARQL expression defined in \cref{def:sparql_expression}.
    \item $L_{\textsc{Bgp}}$: Conjunctive expressions: $(P_1\; \textsc{And}\; P_2 )$ where $P_1$ and $P_2$ are either conjunctive expressions or triple patterns,
    \item $L_{\textsc{Tp}}$: Triple patterns.
    \item $L_{\textsc{Tp+Values}}$: Triple patterns and values expressions of the form $P\; \textsc{Values}\; D$, where $P$ is a triple pattern.
\end{itemize}
The definition of interface languages based on SPARQL expression also allows for defining containment relations between the different languages according to their expressiveness. 

\begin{definition}[Interface Language Containment]
Let $L_1$ and $L_2$ be two interface languages, we say that $L_1$ is contained in $L_2$, if all SPARQL expressions in $L_1$ are also in $L_2$: 
$$
L_1 \subseteq L_2, \text{if } \forall P \in L_1 \Rightarrow P \in L_2 .
$$
\end{definition}

For example, we can state the following containment relations for the previously introduced languages:  $L_{\textsc{Tp}} \subseteq L_{\textsc{BGP}} \subseteq L_{\textsc{CoreSparql}}$, or $L_{\textsc{Tp}} \subseteq L_{\textsc{Tp+Values}}$.
%
With this formalism, we can define the languages of common LDF interfaces and compare them according to their expressiveness.
Triple Pattern Fragment (TPF) servers support querying triple patterns ($L_{\textsc{Tp}}$) ~\cite{DBLP:journals/ws/VerborghSHHVMHC16}, bindings-restricted TPF server support triple patterns and \textsc{Values} expressions ($L_{\textsc{Tp+Values}}$) \cite{DBLP:journals/corr/HartigA16}, and SPARQL endpoint support any expression ($L_{\textsc{CoreSparql}}$).  

To complement the definition of LDF interfaces, we introduce the concept of \emph{interface metadata}.
For a SPARQL expression $P$, an interface may provide interface-specific metadata $M(P)$ describing the data obtained from the RDF graph.
The interface metadata may range from simple statistics such as the number of expected results to more elaborate metadata describing statistics, provenance, and licensing information.
Similar to \cite{DBLP:journals/ws/VerborghSHHVMHC16}, we assume the metadata provided for a given expression $P$ to be an RDF graph, that is $M: \mathcal{P} \to \mathcal{G}$.
%
Examples for common interface metadata are:
\begin{itemize}
    \item SPARQL endpoints $M_{\textsc{Ep}}$: $M(P) = \emptyset,\; \forall P \in L_{\textsc{CoreSparql}}$,
    \item Triple Pattern Fragments $M_{\textsc{Tpf}}$: $M(P)$ is an RDF graph that contains an estimate of the number of triples that match the expression $P$, $\forall P \in L_{\textsc{Tp}}$,
    \item Bindings-restricted Triple Pattern Fragments $M_{\textsc{brTpf}}$: $M(P)$ is an RDF graph that contains an estimate of the number of triples that match the expression $P$, $\forall P \in L_{\textsc{Tp+Values}}$.
\end{itemize}
Besides enabling a more fine-grained distinction between LDF interfaces, the metadata may impact the potential querying strategies employed by a client.
%
%
%
Finally, combining interface language and metadata, we define a Linked Data Fragment interface as follows.

\begin{definition}[Linked Data Fragment Interface]
A Linked Data Fragment interface is a 2-tuple $f = (L_f, M_f)$, where 
\begin{itemize}
    \item $L_f \in \mathcal{L}$, the interface language,
    \item $M_f: \mathcal{P} \to \mathcal{G}$, the interface metadata for an expression $P$.
\end{itemize}
\end{definition}

Conceptually, we distinguish LDF \emph{interfaces} which define the interface language and metadata, and LDF \emph{services}, which are Web servers that implement a specific interface.

\begin{definition}[Linked Data Fragment Service]
A Linked Data Fragment service $c \in U$ is a Web service that supports the evaluation of SPARQL expressions and provides metadata according to the LDF interface $int(c) = (L_c, M_c)$ that it implements.
\end{definition}

We reuse the function $ep: U \to \mathcal{G}$~\cite{DBLP:conf/esws/ArandaAC11} that maps an LDF service to the RDF graph $ep(c)$ available at the service.
The evaluation of a SPARQL expression $P$ over an LDF service $c$ is then given as
\begin{equation}
\label{eq:1}
\llbracket P \rrbracket_{c} := 
\begin{cases}
\llbracket P \rrbracket_{ep(c)}, & \text{if } P \in L_c. \\
\emptyset, &  \text{otherwise.}
\end{cases}
\end{equation}

Note the difference in the subscript $c$ and $ep(c)$ to distinguish between solution mappings produced by an LDF service $\llbracket \cdot \rrbracket_c$ regarding its interface language and the solution mappings for evaluating any expression over the graph available at the LDF service  $\llbracket \cdot \rrbracket_{ep(c)}$.
%
%
Combining these previous definitions, we define the concept of federations of LDF services as follows.

\begin{definition}[Federation of Linked Data Fragment Services]
A Federation of Linked Data Fragment services is a 3-tuple $F = ( C, int, ep)$, where 
\begin{itemize}
    \item $C = \{c_1, \dots, c_n\} \subset U$, a set of URIs for LDF services,
    \item $int$, a function that maps an LDF service to its interface,
    \item $ep$, a function that maps each LDF service to the graph available at that service.
\end{itemize}
\end{definition}

Federations in which all LDF services implement the same LDF interfaces are called \textit{homogeneous}, and \textit{heterogeneous} otherwise.
%
For practical reasons, in the remainder of this work, we just consider graphs in the federation without blank nodes and focus on federations in which all members are at least able to evaluate triple patterns of any form: $L_{\textsc{TP}} \subseteq L_c,\, \forall c \in C$.
\footnote{This means that we do not include data dumps, even though they are also considered Linked Data Fragments in other works \cite{DBLP:conf/semweb/HartigLP17,DBLP:journals/ws/VerborghSHHVMHC16}.}
\begin{example}
We can define the federation from our motivating example as $F_{ex} = (\{c_1, c_2\}, int, ep)$ with  $c_1=$ \texttt{wikidata:sparql}\textsuperscript{\ref{fn:wd_sparql}}, $c_2=$ \texttt{dbpedia:tpf}\textsuperscript{\ref{fn:dbp_tpf}}, $int(c_1) = (L_{\textsc{CoreSparql}}, M_{\textsc{Ep}})$, $int(c_2) = (L_{\textsc{Tp}},$ $M_{\textsc{Tpf}})$, $ep(c_1) = G_{Wikidata}$ and $ep(c_2) = G_{DBpedia}$. 
\end{example}
Following the notation by Acosta et al.~\cite{DBLP:reference/bdt/AcostaHS19}, we denote the evaluation of a SPARQL expression over a federation of LDF interfaces $F$ as $\llbracket \cdot \rrbracket_F$ and define the semantics in the following way. 

\begin{definition}[Set Semantics of SPARQL Query Processing over LDF Service Federations]
\label{def:set_semantics}
Given a SPARQL expression $P$ and a federation $F = (C, int, ep)$, the result set of evaluating $P$ over $F$ is given as
$$
\llbracket P \rrbracket_F := \llbracket P \rrbracket_G \text{, with } G = \bigcup \limits_{\forall c \in C} ep(c)
$$
\end{definition}



\section{Federated Query Processing over Heterogeneous Federations}
\label{sec:approach}
In the presence of heterogeneous LDF service federations, novel challenges arise that cannot be addressed by existing approaches. 
Therefore, we propose a framework for heterogeneous federations to address the central components of federated query processing: 
\begin{enumerate*}
    \item[(\S\ref{sec:query_decomposition})] query decomposition,
    \item[(\S\ref{sec:query_planner})]  query planning, and
    \item[(\S\ref{sec:polymorphic_join_operator})] physical operators.
\end{enumerate*}
Furthermore, for each component of the framework, we propose an approach aiming to obtain efficient query plans.

\subsection{Query Decomposition}
\label{sec:query_decomposition}

The goal of query decomposition is grouping the query into subexpressions such that the evaluation of the subexpressions over the members of the federation minimizes execution time while ensuring that all expected answers are produced. 
Existing decomposition approaches assume that all federation members are able to evaluate any SPARQL expression.
Since this assumption is not valid in heterogeneous federations, we propose \textit{interface-compliant} query decompositions and their evaluation over such federations.

Given a given SPARQL query $P$ and a federation $F = (C, int, ep)$, query decomposition aims to group a query into subexpressions such that they can be answered by the relevant sources in the federation. 
The first step to achieve this goal is source selection.
That is, select the relevant sources $r(tp_i)$ for all triple patterns $tp_i$ in $P$, with $r(tp_i) = \{ c \in C \mid \llbracket tp_i \rrbracket_{ep(c)} \neq \emptyset \}$.
Because we require all LDF services to at least evaluate triple patterns of any form ($L_{\textsc{TP}} \subseteq L_c,\, \forall c \in C$), in principle, the relevant sources can be selected by evaluating each triple pattern at each service.
Typically, the capabilities of the services allow more efficient source selection strategy implementations, such as \textsc{Ask} queries for SPARQL endpoints, leveraging the metadata  $($u$, \texttt{void:triples}, cnt) \in M(tp_i)$ for TPF and brTPF servers, or using pre-computed data catalogues.
%

Once the relevant sources are identified, the query engine decomposes the query into subexpressions to be evaluated at the services in the federations.  
If there exists a triple pattern $tp$ in a basic graph pattern (BGP) $P$ with no relevant source $r(tp) = \emptyset$, the evaluation of $P$ over the federation is the empty set.
In the following, we focus on query decompositions for BGPs where all triple patterns have at least one relevant source.
For simplicity, we extend notation and consider a BGP $P = (tp_1\, \textsc{And}\, \dots\, \textsc{And}\, tp_n)$ also as a set of $n$ triple patterns: $P =\{tp_1, \dots, tp_n\}$.
%
\begin{definition}[Query Decomposition]
Given a BGP $P$ and an LDF service federation $F= (C, int, ep)$, a query decomposition $D(P, F) = \{ d_1, \dots, d_m \}$ is a set of tuples  $d_i = ( SE_i, S_i )$ where
\begin{itemize}
    \item $SE_i$ is a subexpression of $P$, and
    \item $S_i \subseteq C $ a non-empty the subset of services over which $SE_i$ is evaluated, such that $\bigcup \limits_{i=1, \dots , n} SE_i = P$.
\end{itemize}
\end{definition}

Because the query decomposition as such does not consider the interface language of the services, it is possible that for a valid query decomposition $D$:
$
\exists d_i \in D: \exists c \in S_i: SE_i \not \in L_c.
$
Considering the query from our motivating example, a valid decomposition would be $D(P, F) = \{((tp_1\, \textsc{And}\, tp_2), \{c_1\}), ((tp_3\, \textsc{And}\, tp_4\, \textsc{And}\, tp_5), \{c_2\})\}$, even though $c_2$ is a TPF server that can only evaluate triple patterns. 
Therefore, we introduce an evaluation function $\theta$ for the interface-compliant evaluation of SPARQL expressions. 

\begin{definition}[Interface-compliant Evaluation of an Expression]
\label{def:interface_compliant_evaluation}
Given a BGP $P$ and an LDF service $c \in U$, the interface-compliant evaluation of $P$ over $c$ is given as follows.
\begin{numcases}{\theta_c(P) :=}
\llbracket P \rrbracket_{c} & if $P \in L_c$. \\
\llbracket P_ 1 \rrbracket_{c} \bowtie \dots \bowtie \llbracket P_l \rrbracket_{c} & otherwise. \label{case:not_in_language} 
\end{numcases}

For some $P_1, \dots, P_l$ with $P = (P_1\; \textsc{And}\; \dots\; \textsc{And}\; P_l )$ in Equation (\ref{case:not_in_language}), such that the following conditions hold:
\begin{itemize}
    \item Equivalence: $\theta_c(P) \equiv  \llbracket P \rrbracket_{ep(c)}$ 
    \item Compliance: $P_i \in L_c,\; \forall P_i \text{ in } \{P_1  \dots\  P_l \}$
\end{itemize}
\end{definition}

The intuition of the interface-compliant evaluation is as follows.
If the BGP $P$ is in the language of the service, then $P$ can be evaluated directly at the service.
Otherwise, the original expression $P$ is split into subexpressions, such that each subexpression is in the language of the service. 
As a result, joining the solutions of evaluating the individual subexpressions at the service yields the same solutions as evaluating $P$ over the graph of the service $ep(c)$.
We denote the number of subexpressions in the interface-compliant evaluation as $|\theta_c(P)|$.
A compliant evaluation of $SE= (tp_3\, \textsc{And}\, tp_4\, \textsc{And}\, tp_5 )$ from our previous example at the DBpedia TPF server $c_2$ would be $\theta_c(SE) = \llbracket tp_3 \rrbracket_{c_2} \bowtie \llbracket tp_4 \rrbracket_{c_2} \bowtie \llbracket tp_5 \rrbracket_{c_2}$.
With this notion, we define the interface-compliant evaluation of a query decomposition. 

\begin{definition}[Interface-compliant Evaluation of a Query Decomposition]
\label{def:compliant_decomp_eval}
Given a query decomposition $D(P, F)$ for the BGP $P$ and federation $F$, the evaluation of $P$ following the query decomposition $\theta_{D(P, F)}(P) $ is given as the conjunction ($\Bowtie$)  of the subexpressions $SE_i$ evaluated at all ($\cup$) services in $S_i$:
$$
\theta_{D(P, F)}(P): =\;  \Bowtie_{(SE_i, S_i) \in D(P, F)}( \cup_{c_j \in S_i}\; \theta_{c_j}(SE_i) )
$$
\end{definition}

After defining query decompositions and the evaluation of such decompositions that is compliant with the interfaces of the LDF services in the federations, the problem of finding a \textit{suitable} query decomposition for a given query arises.  
The common goal of query decomposition approaches is finding a decomposition that yields complete answers according to the assumed semantics, while the cost of executing the decomposition by the query engine is minimized  \cite{DBLP:conf/dexa/EndrisGLMVA17,DBLP:journals/tlsdkcs/VidalCAMP16}. 
However, these approaches do not explicitly measure the expected answer completeness of query decompositions.
For instance, in \cite{DBLP:journals/tlsdkcs/VidalCAMP16} the answer completeness is encoded implicitly in the query decomposition cost by considering the number of non-selected endpoints: if fewer relevant endpoints are contacted according to a decomposition, its cost is higher and vice versa. 
Extending existing approaches, we propose the concept of query decomposition \emph{density} as a measure to estimate and compare the expected answer completeness of different decompositions.
In contrast to \cite{DBLP:journals/tlsdkcs/VidalCAMP16}, our density measure not only considers the non-selected endpoints but also how triple patterns are grouped into subexpressions that are evaluated jointly at the services.


\smallbreak
\noindent 
\textit{Query Decomposition Density. }
The query decomposition density is a proxy for the expected answer completeness of a decomposition.
We define density as a relative measure with respect to a decomposition that guarantees answer completeness, i.e., the \emph{atomic} decomposition.
The atomic decomposition evaluates every single triple pattern in a subexpression at all relevant sources and thus, guarantees answer completeness.

\begin{definition}[Atomic Decomposition]
\label{def:base}
Given a federation $F = (\{ c_1, \dots, c_k \}, int, ep)$ and a BGP $P = (tp_1\; \textsc{And}\; \dots \textsc{And}\; tp_n)$, the atomic decomposition is given as
$$
D^*(P, F) = \{ (tp_1 , r(tp_1)), \dots , (tp_n, r(tp_n) \}.
$$
\end{definition}

\vspace{1em}
\begin{lemma}
The evaluation of $P$ following  $D^*(P,F)$ yields complete answers, that is:
\begin{equation}
\label{eq:completenes}
\theta_{D^*(P, F)}(P) = \llbracket P \rrbracket_{F}    
\end{equation}
\end{lemma}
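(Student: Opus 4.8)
The plan is to unfold both sides of \eqref{eq:completenes} using the definitions established above, and to reduce the identity to two facts: that a single triple pattern evaluated over a union of graphs equals the union of its evaluations over the individual graphs, and that conjunctive (BGP) evaluation distributes over the natural join.

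First I would instantiate \cref{def:compliant_decomp_eval} on $D^*(P,F)$. Since every subexpression in the atomic decomposition is a single triple pattern $tp_i$, and $L_{\textsc{TP}} \subseteq L_c$ for every $c \in C$, each $tp_i$ lies in $L_{c_j}$ for all $c_j \in r(tp_i)$; hence the first case of \cref{def:interface_compliant_evaluation} applies and, by \eqref{eq:1}, $\theta_{c_j}(tp_i) = \llbracket tp_i \rrbracket_{c_j} = \llbracket tp_i \rrbracket_{ep(c_j)}$. Therefore $\theta_{D^*(P, F)}(P) = \Bowtie_{i=1}^{n} \bigl( \bigcup_{c_j \in r(tp_i)} \llbracket tp_i \rrbracket_{ep(c_j)} \bigr)$, and this is well defined since the natural join of sets of compatible mappings is associative and commutative.

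Next I would rewrite each inner union over all of $C$. By definition $r(tp_i) = \{ c \in C \mid \llbracket tp_i \rrbracket_{ep(c)} \neq \emptyset \}$, so the services in $C \setminus r(tp_i)$ contribute only $\emptyset$ and $\bigcup_{c_j \in r(tp_i)} \llbracket tp_i \rrbracket_{ep(c_j)} = \bigcup_{c_j \in C} \llbracket tp_i \rrbracket_{ep(c_j)}$. The key step is then the claim that $\bigcup_{c_j \in C} \llbracket tp_i \rrbracket_{ep(c_j)} = \llbracket tp_i \rrbracket_{G}$ with $G = \bigcup_{c \in C} ep(c)$: a solution mapping $\mu$ whose domain is exactly $vars(tp_i)$ belongs to $\llbracket tp_i \rrbracket_{G}$ iff the triple $\mu(tp_i)$ is in $G$, iff $\mu(tp_i) \in ep(c_j)$ for some $c_j \in C$ (here we use that $G$ is the set-theoretic union of the $ep(c_j)$ and that, by assumption, none of these graphs contains blank nodes, so merging is unproblematic), iff $\mu \in \llbracket tp_i \rrbracket_{ep(c_j)}$ for some $c_j$. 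This is exactly where the restriction to single triple patterns is essential: the analogous equality fails for general BGPs, since a join can be satisfiable across two graphs without being satisfiable within either one, which is precisely why non-atomic decompositions need not be complete. I expect this lemma to be the only real content of the proof.

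Finally I would assemble the pieces. Substituting the last identity yields $\theta_{D^*(P, F)}(P) = \Bowtie_{i=1}^{n} \llbracket tp_i \rrbracket_{G}$. By the compositional semantics of conjunctive expressions \cite{DBLP:journals/tods/PerezAG09}, namely $\llbracket (P_1\; \textsc{And}\; P_2) \rrbracket_{G} = \llbracket P_1 \rrbracket_{G} \bowtie \llbracket P_2 \rrbracket_{G}$, applied inductively over the $n$ conjuncts (and noting that $\Bowtie$ and $\bowtie$ denote the same natural join of sets of compatible mappings), this equals $\llbracket (tp_1\; \textsc{And}\; \dots\; \textsc{And}\; tp_n) \rrbracket_{G} = \llbracket P \rrbracket_{G}$, which by \cref{def:set_semantics} is exactly $\llbracket P \rrbracket_{F}$. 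Everything outside the triple-pattern union lemma is routine bookkeeping with the definitions.
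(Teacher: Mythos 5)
Your proposal is correct and follows essentially the same route as the paper's proof: unfold the decomposition into a join of unions over relevant sources, extend the unions to all of $C$ using that non-relevant sources contribute $\emptyset$, apply the key identity $\bigcup_{c \in C} \llbracket tp_i \rrbracket_{ep(c)} = \llbracket tp_i \rrbracket_{\bigcup_{c\in C} ep(c)}$, and finish with the compositional semantics of \textsc{And}. The only (cosmetic) difference is that you justify the union identity directly via the characterization of triple-pattern matching, whereas the paper proves it by contradiction in a footnote.
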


\begin{proof}
We provide a direct proof by assuming the left-hand side in \cref{eq:completenes}.  
Since we require all services to be able to evaluate triple patterns and $D^*$ is composed of triple patterns only, the evaluation of $D^*(P, F)$ is given by \cref{def:interface_compliant_evaluation} and \cref{def:compliant_decomp_eval} as 
\begin{equation}
\label{eq:2}
\theta_{D^*(P, F)}(P): =\;  \Bowtie_{(SE_i, S_i) \in D^*(P, F)}( \cup_{c_j \in S_i}\; \llbracket SE_i \rrbracket_{c_j} )
\end{equation}
with $SE_i = tp_i$.
By Def.~\ref{def:base}, $S_i$ corresponds to the relevant sources of $tp_i$, which is given by $r(tp_i) = \{ r_{i1}, \dots, r_{im} \}$. 
Next, we expand the \cref{eq:2} with $S_i$ in the following way.

\begin{equation}
\label{eq:3}
(\llbracket tp_1 \rrbracket_{r_{11}} \cup \dots \cup \llbracket tp_1 \rrbracket_{r_{1l}} ) \bowtie \dots \bowtie (\llbracket tp_n \rrbracket_{r_{n1}} \cup \dots \cup \llbracket tp_n \rrbracket_{r_{no}} )
\end{equation}

Next, we show that we can evaluate all triples patterns at all sources (relevant and non-relevant).
By definition, we have that the evaluation of a triple pattern over a non-relevant source is the empty set:  $\llbracket tp_i \rrbracket_{c} = \emptyset,\; \forall c \not \in r(tp_i)$. Further, since $( \llbracket tp_i \rrbracket_{r_{ij}} \cup \emptyset) = \llbracket tp_i \rrbracket_{r_{ij}}$, we can expand \cref{eq:3} to 

\begin{equation}
\label{eq:4}
(\llbracket tp_1 \rrbracket_{c_1} \cup \dots \cup \llbracket tp_1 \rrbracket_{c_k} ) \bowtie \dots \bowtie (\llbracket tp_n \rrbracket_{c_1} \cup \dots \cup \llbracket tp_n \rrbracket_{c_k} )
\end{equation}

According to \cref{eq:1} and the fact that triple patterns are in the interface language of all services, we have $\llbracket tp_i \rrbracket_{c_j} = \llbracket tp_i \rrbracket_{ep(c_j)}$ and can rewrite \cref{eq:4} as 

\begin{equation}
\footnotesize
\label{eq:5}
(\llbracket tp_1 \rrbracket_{ep(c_1)} \cup \dots \cup \llbracket tp_1 \rrbracket_{ep(c_k)} ) \bowtie \dots \bowtie (\llbracket tp_n \rrbracket_{ep(c_1)} \cup \dots \cup \llbracket tp_n \rrbracket_{ep(c_k)} )
\end{equation}
Because we assume set semantics, the following equality holds\footnote{
We prove this equality by contradiction. 
Consider $G = \bigcup_{c \in C} ep(c)$.  
Assume that there exists a solution mapping $\mu$ s.t. $\mu \in (\llbracket tp_i \rrbracket_{ep(c_1)} \cup \dots \cup \llbracket tp_i \rrbracket_{ep(c_k)} )$ 
and $\mu \notin \llbracket tp_i \rrbracket_G$. 
This means that the evaluation of a subexpression over some source, e.g. $\llbracket tp_i \rrbracket_{ep(c_j)}$, is producing additional answers w.r.t. the evaluation over the union of all RDF graphs. 
This could only happen if $ep(c_j) \nsubseteq G$, however,  this contradicts the definition of $G$. 
Now assume that $\mu \in \llbracket tp_i \rrbracket_G$ but $\mu \notin (\llbracket tp_i \rrbracket_{ep(c_1)} \cup \dots \cup \llbracket tp_i \rrbracket_{ep(c_k)} )$.  
Without loss of generality, assume that $\mu$ was produced from matching an RDF triple $t \in  G$  s.t. $t \notin ep(c)$ for all services $c \in C$ in the federation. This is again a contradiction with the definition of $G$.}

\begin{equation}
\label{eq:6}
(\llbracket tp_i \rrbracket_{ep(c_1)} \cup \dots \cup \llbracket tp_i \rrbracket_{ep(c_k)} ) = \llbracket tp_i \rrbracket_{\bigcup \limits_{c \in C} ep(c)}
\end{equation}
With \cref{eq:6} we can reformulate \cref{eq:5} as 
\begin{equation}
\label{eq:7}
\llbracket tp_1 \rrbracket_{\bigcup \limits_{c \in C} ep(c)} \bowtie \dots \bowtie  \llbracket tp_n \rrbracket_{\bigcup \limits_{c \in C} ep(c)}
\end{equation}
and according to \cref{def:set_semantics} and Definition 4 in \cite{DBLP:conf/icdt/Schmidt0L10}, we have the following equality:

\begin{equation*}
\begin{aligned}
& \llbracket tp_1 \rrbracket_{\bigcup \limits_{c \in C} ep(c)} \bowtie \dots \bowtie  \llbracket tp_n \rrbracket_{\bigcup \limits_{c \in C} ep(c)} \\
=\; & \llbracket tp_1 \rrbracket_{F} \bowtie \dots \bowtie  \llbracket tp_n \rrbracket_{F} \\
=\; & \llbracket tp_1\; \textsc{And}\; \dots \;\textsc{And}\; tp_n \rrbracket_{F}  \\
=\; & \llbracket P \rrbracket_{F}
\end{aligned}
\end{equation*}
\end{proof}


Another type of structure that preserves completeness are exclusive groups \cite{DBLP:conf/semweb/SchwarteHHSS11}, which are subexpressions of a query that can only be answered by a single source. 
They are defined as follows.
\begin{definition}[Exclusive Group]
Given a federation $F= (C, int, ep )$, a BGP $X$ is called an exclusive group, if for all triple patterns $tp_i \in X$ there exists only one relevant source $c_{X} \in C$:
$$
X = \{ tp_i \mid r(tp_i) = \{ c_{X} \}  \}
$$
\end{definition}

We represent query decompositions by \textit{decomposition graphs} and compute the relative density with respect to the decomposition graph of the atomic decomposition $D^*(P,F)$ as a measure of completeness.
More edges in the graph of a given decomposition yield a higher density and, thus, expected answer completeness. 

\begin{definition}[Query Decomposition Graph]
\label{def:decomposition_graph}
Let $D(P,F)$ be a query decomposition for the BGP $P$ and federation $F=(C, int, ep)$.
The decomposition graph $G_{D(P, F)} = (V, E)$ of $D(P,F)$ is: 
\begin{itemize}[leftmargin=0.8\parindent]
    \item[] The set of vertices $V = \{ tp_i \in P\} \cup \{ r(tp_i) \mid \forall tp_i \in P \}$,
    \item[] The set of edges $E \subseteq V \times V$ are given by the following rules.
\end{itemize}
\begin{enumerate}[label=Rule \Roman*,leftmargin=4\parindent]
    \item \label{item:rule1} Add an edge between a triple pattern $tp_i \in P$ and a relevant source $r_{ij} \in r(tp_i$), if $tp_i$ is part of a subexpression $SE$ that is evaluated at $r_{ij}$: $\exists (SE, S) \in D(P,F)$ with $tp_i \in SE \land r_{ij} \in S$.
    \item \label{item:rule2} Add an edge between two triple patterns $tp_i$ and $tp_j$, if they do not co-occur in a subexpression $SE$ in $D$: $(tp_i, tp_j) \in E$, if $\not \exists \, SE \in D(P, F): tp_i \in SE \land tp_j \in SE$.
    \item \label{item:rule3} Add an edge between two triple patterns $tp_i$ and $tp_j$, if they are part of the same exclusive group $X$: $(tp_i, tp_j) \in E$, if $tp_i \in X \land tp_j \in X$.
    \item \label{item:rule4} Add an edge between all triple patterns, if the decomposition is composed of just a single subexpression to be evaluated at one source : $D(P, F) = \{ (SE, S ) \} \land |S| = 1 $. 
\end{enumerate}
\end{definition}

The rules for adding edges to the graph are designed in such a way that the maximum number of edges is present for the decomposition graph of the atomic decomposition $G_{D^*(P, F)} = (V^*, E^*)$.
This is because each triple pattern is connected to each relevant source (\ref{item:rule1}) and there is an edge between each pair of triple patterns (\ref{item:rule2}).
If a decomposition contacts fewer sources, the decomposition graph will have fewer edges according to \ref{item:rule1}.
Further, if more triple patterns are grouped together into subexpressions in a query decomposition, its graph will also have fewer edges according to \ref{item:rule2}. 
The rationale of this rule is that grouping triple patterns could potentially miss solution mappings that are only produced by joining data from two different sources.
The remaining rules are introduced to handle the following exceptions.
\ref{item:rule3} handles exclusive groups: triple patterns of exclusive groups can be grouped into a single subexpression without negatively impacting the answers completeness.
Finally, \ref{item:rule4} handles the following cases: if the decomposition only has a single subexpression that is evaluated at a single source, it does not have an impact on the completeness how these triple patterns are grouped into subexpressions of the decomposition. 
In contrast to \ref{item:rule3}, in \ref{item:rule4} even though $SE$ is just evaluated at a single source, $SE$ does not need to be an exclusive group and can have other relevant sources that are not in $S$.
By these rules, we can measure the density of a decomposition graph relative to the maximum number of possible edges as given by the atomic decomposition graph.

\begin{definition}[Density of a Query Decomposition]
\label{def:decomposition_completeness}
Given a query decomposition $D(P, F)$ and the corresponding graph $G_{D(P,F)} = (V, E)$, the density $density(D(P, F))$ is computed as:
$$
density(D(P, F)) = \frac{|E|}{|E^*|} \in [0,1] .
$$
\end{definition}
%
%
\begin{theorem}
The evaluation of a query decomposition $D(P, F)$ over a federation $F$ yields complete answers, if $density(D(P, F)) = 1$:
\begin{equation}
\label{eq:completenes_implication}
density(D(P, F)) = 1 \implies \theta_{D(P,F)}(P) = \llbracket P \rrbracket_F
\end{equation}
\end{theorem}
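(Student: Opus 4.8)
The plan is to derive from the hypothesis $density(D(P,F)) = 1$ a tight structural description of $D(P,F)$ — essentially that it may only consist of single triple patterns routed to all of their relevant sources and of exclusive groups routed to their unique source — and then to conclude using the Equivalence condition of \cref{def:interface_compliant_evaluation} together with the calculation already carried out in the proof of the preceding Lemma.

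First I would reduce the density condition to an equality of edge sets. The vertex set of a decomposition graph depends only on the triple patterns of $P$ and their relevant sources, so $G_{D(P,F)}$ and the atomic decomposition graph $G_{D^*(P,F)} = (V^*, E^*)$ share the same vertices, $V = V^*$. Moreover, every edge that Rules~I--IV can contribute to $E$ is already present in $E^*$: a Rule~I edge joins a triple pattern $tp_i$ to one of its relevant sources, and in the atomic graph each such pair is an edge (again by Rule~I, since $tp_i$ is the singleton subexpression $\{tp_i\}$ evaluated at all of $r(tp_i)$); a Rule~II, III, or IV edge joins two triple patterns, and since no two triple patterns co-occur in a subexpression of $D^*$, Rule~II places \emph{all} such pairs into $E^*$. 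Hence $E \subseteq E^*$, so $|E| \le |E^*|$ and $density(D(P,F)) = 1$ iff $E = E^*$.

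Second I would read the structure of $D(P,F)$ off $E = E^*$. Because all Rule~I edges of the atomic graph are present, for every $tp_i \in P$ and every $r_{ij} \in r(tp_i)$ there is some $(SE, S) \in D(P,F)$ with $tp_i \in SE$ and $r_{ij} \in S$. Because all triple-pattern pairs of the atomic graph are present, whenever two distinct triple patterns $tp_a, tp_b$ co-occur in a subexpression of $D(P,F)$ the edge $(tp_a, tp_b)$ must be supplied by Rule~III or Rule~IV. I would then split on whether Rule~IV applies. If it does, $D(P,F) = \{(P, \{c\})\}$, and the Rule~I requirement forces $r(tp_i) = \{c\}$ for every $tp_i$, i.e.\ $P$ is an exclusive group with unique source $c$. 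If Rule~IV does not apply, then every subexpression $SE_i$ containing at least two triple patterns has each of its internal pairs covered by Rule~III, which by the definition of exclusive groups means all triple patterns of $SE_i$ have the same unique relevant source $c_i$; the Rule~I requirement then forces $c_i \in S_i$ (any other members of $S_i$ are non-relevant for $SE_i$ and contribute nothing), and a single-triple-pattern subexpression $\{tp\}$ is, again by the Rule~I requirement, evaluated at a source set $S \supseteq r(tp)$. Finally I would conclude with the semantics: in every case $\bigcup_{c \in S_i} \theta_c(SE_i) = \llbracket SE_i \rrbracket_F$. For a singleton $\{tp\}$ this follows from $\theta_c(\{tp\}) = \llbracket tp \rrbracket_{ep(c)}$, discarding the non-relevant sources of $S$, and the set-semantics union equality established in the footnote of the Lemma's proof; for an exclusive group $SE_i$ it follows from $\theta_{c_i}(SE_i) \equiv \llbracket SE_i \rrbracket_{ep(c_i)}$ (the Equivalence condition of \cref{def:interface_compliant_evaluation}) together with the observation that every triple of $\bigcup_{c \in C} ep(c)$ matching a triple pattern of $SE_i$ must lie in $ep(c_i)$, so $\llbracket SE_i \rrbracket_{ep(c_i)} = \llbracket SE_i \rrbracket_{\bigcup_{c\in C} ep(c)} = \llbracket SE_i \rrbracket_F$. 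Substituting into \cref{def:compliant_decomp_eval} gives $\theta_{D(P,F)}(P) = \Bowtie_{(SE_i, S_i)} \llbracket SE_i \rrbracket_F$, and since $\bigcup_i SE_i = P$, distributivity of join over BGP evaluation (Definition~4 of \cite{DBLP:conf/icdt/Schmidt0L10}, exactly as invoked at the end of the Lemma's proof) yields $\Bowtie_i \llbracket SE_i \rrbracket_F = \llbracket P \rrbracket_F$; the Rule~IV case is the special instance where $D(P,F)$ is the single exclusive group $P$.

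The step I expect to be the main obstacle is the structural extraction: turning the purely combinatorial identity $|E| = |E^*|$ into the claim that $D(P,F)$ can contain nothing but single triple patterns sent to all their relevant sources and exclusive groups sent to their unique source. This requires carefully accounting for which of the four edge-adding rules can compensate for the edges lost by grouping triple patterns, in particular that Rule~III helps only when the grouped triple patterns form an exclusive group and that the Rule~IV escape hatch still collapses, via the Rule~I requirement, to the exclusive-group situation. Once this characterization is in hand, the semantic part of the argument is essentially the one already written for the atomic decomposition, now applied subexpression by subexpression, and is routine.
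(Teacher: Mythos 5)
Your proof is correct, but it takes a genuinely different route from the paper's. You argue \emph{directly}: after observing $E \subseteq E^*$ (so that density one forces $E = E^*$), you extract a complete structural characterization of density-one decompositions --- nothing but singleton triple patterns routed to all of their relevant sources and exclusive-group subexpressions routed to their unique source, with the Rule~IV case collapsing via Rule~I to the exclusive-group situation --- and then verify completeness subexpression by subexpression, reusing the union/join calculation from the Lemma. The paper instead argues by \emph{contradiction}: it assumes density one together with incompleteness, asserts that missing answers can arise in only two ways (a triple pattern not evaluated at some relevant source, or triple patterns with several relevant sources grouped into one subexpression), and shows that each case deletes an edge relative to $E^*$ via Rule~I or Rule~II. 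Your version buys a positive description of what density-one decompositions look like and makes explicit the compensating role of Rules~III and~IV, which the paper's two cases never discuss; the paper's version is shorter but rests on an unproved exhaustiveness claim for its two failure modes. One caveat that applies equally to both arguments: a triple pattern could in principle be spread over several tuples of $D(P,F)$ whose source sets only \emph{jointly} cover its relevant sources (the decomposition definition requires only $\bigcup_i SE_i = P$, not a partition), in which case your step ``a singleton $\{tp\}$ is evaluated at some $S \supseteq r(tp)$'' does not follow from Rule~I alone; the paper's proof silently excludes this degenerate case as well, so you lose no rigor relative to it.
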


\begin{proof}
We prove the implication in \cref{eq:completenes_implication} by contradiction.
We assume $density(D(P, F))=1$ and $\theta_{D(P,F)}(P) \neq \llbracket P \rrbracket_F$.
According to \cref{def:decomposition_completeness}, $density(D(P, F)) = 1$ holds only if the decomposition graph of $D(P,F)$ has the same number of edges as the decomposition graph of the atomic decomposition: $|E| = |E^*|$.
Following \cref{def:base} and \cref{def:decomposition_graph}, in $E^*$ there is an edge between each triple pattern and its relevant sources (\ref{item:rule1}) and an edge between every pair of triple patterns (\ref{item:rule2}).
The maximum number of edges is
$$
|E^*| = \underbrace{\sum_{tp_i \in P} r(tp_i)}_{\text{\ref{item:rule1}}}\; + \; \underbrace{\vphantom{ \sum_{tp_i \in P} r(tp_i))} 0.5 \cdot |P|  \cdot  (|P| - 1)}_{\text{\ref{item:rule2}}} . 
$$

Since we prove completeness, we focus on the case when a decomposition yields fewer answers: $\theta_{D(P,F)}(P) \subset \llbracket P \rrbracket_F$.
This can occur in two cases:
\begin{enumerate}[label=\textsc{Case} \arabic*:, leftmargin=*]
    \item A part of the query is not evaluated at a relevant source. 
    Without loss of generality, consider that a triple pattern $tp_i \in P$ is not evaluated at a relevant source $c_j$ and $\llbracket tp_i \rrbracket_{c_j}$ contributes to the answers of $P$. 
    In this case, the decomposition graph $G_{D(P,F)}$ does not have an edge $(tp_i,c_j)$ according to \ref{item:rule1} and,  therefore, $|E| < |E^*|$. This contradicts the assumption that $density(D(P,F)) = 1$.
    
    \item Triple patterns with several relevant sources are grouped into subexpressions.  
    Consider the solution mapping $\mu \in \llbracket P \rrbracket_F$, with $\mu = \{ \mu_1 \cup \mu_2 \mid \mu_1 \in \llbracket tp_1 \rrbracket_{c_1} \land \mu_2 \in \llbracket tp_2 \rrbracket_{c_2}, \mu_1 \sim \mu_2 \}$, and without loss of generality assume that $r(tp_1) = r(tp_2) = \{c_1, c_2\}$. 
    Such a solution mapping $\mu$ does not exist in $\theta_{D(P,F)}(P)$ in the case that the two triple patterns are evaluated jointly at the source $c_1$ and $c_2$, that is 
    $$
    ( (tp_1 \; \textsc{And}\; tp_2), \{c_1, c_2\}) \in D(P,F)
    $$
    In this case, the edge $(tp_1, tp_2)$ does not exist in $E$ according to \ref{item:rule2} but the edge exists in $E^*$ because
    $$
    ( tp_1, \{c_1, c_2\} ), ( tp_2, \{c_1, c_2\} ) \in D^*(P,F)
    $$
    Therefore, we have $|E| < |E^*|$ which contradicts the assumption that $density(D(P,F)) = 1$.
\end{enumerate}
\end{proof}

\captionsetup*[subfigure]{position=bottom,textfont=normalfont,labelfont=normalfont}
\begin{figure*}[t]
    \centering
    \subfloat[Graph $G_{D^*(P,F)}$
    ]{\includegraphics[width=0.23\textwidth]{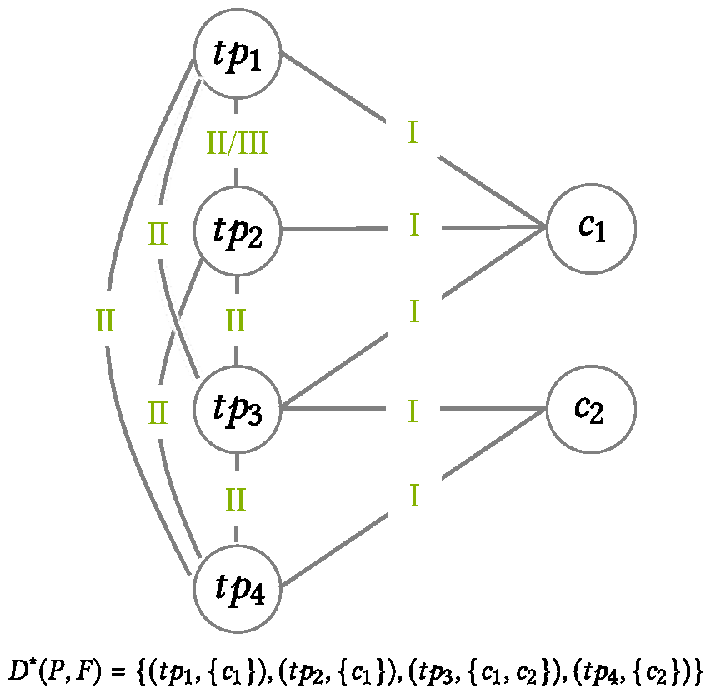}\label{fig:d_star}}
    \hfill
    \subfloat[Graph $G_{D_1(P,F)}$
    ]{\includegraphics[width=0.23\textwidth]{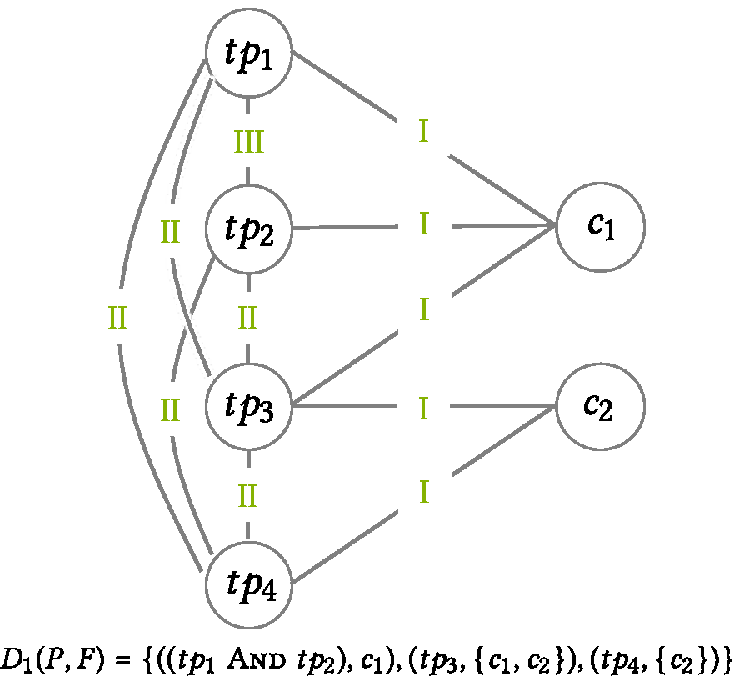}\label{fig:d_1}}
    \hfill
    \subfloat[Graph $G_{D_2(P,F)}$
    ]{\includegraphics[width=0.23\textwidth]{figures/d_1.eps}\label{fig:d_2}}
    \hfill
    \subfloat[Graph $G_{D_3(P,F)}$
    ]{\includegraphics[width=0.23\textwidth]{figures/d_1.eps}\label{fig:d_3}}
    \vspace{-0.5\baselineskip}
    \caption{Query decomposition graphs for decompositions from \Cref{ex:decomp_graphs}. The rules for adding edges are indicated in green.}
    \label{fig:motivating_example}
\end{figure*}

We can prove that a decomposition density of $1$ implies answer completeness, however, the inverse (i.e., $density(D(P, F)) = 1 \impliedby \theta_{D(P,F)}(P) = \llbracket P \rrbracket_F $ ) cannot be guaranteed. 
For example, there might be a triple pattern $tp_1$ with two relevant sources $c_1$ and $c_2$ with just source $c_1$ contributing to the final answers. 
A decomposition $D(P,F)$ where $tp_1$ is not evaluated at $c_2$ might still yield complete answers but $density(D(P,F)) < 1$ according to \ref{item:rule1}.
Therefore, the decomposition density is a measure for the \emph{expected} completeness based on the assumptions that answer completeness is negatively affected while 
\begin{enumerate*}[label=\roman*)]
    \item contacting fewer relevant sources, and
    \item grouping triple patterns that can be evaluated at several sources into subexpressions.
\end{enumerate*}
Estimating the true completeness more accurately would require additional information on the data provided by the LDF services than just the relevant sources.
Such additional information could be used to improve the effectiveness of our measure, for example by weighting the edges in the decomposition graph according to their importance. 
However, such an extension is out of the scope of this work. 

\begin{example}
\label{ex:decomp_graphs}
Let us consider the BGP $P = (tp_1\; \textsc{And}\; tp_2\; \textsc{And}\;$ $tp_3 \; \textsc{And}\; tp_4 )$ from the SPARQL query of the motivating example in \Cref{sec:motivating_example} and the federation $F_{ex}=( \{c_1, c_2 \}, int, ep)$.
The relevant sources are $r(tp_1) = \{ c_1 \}$, $r(tp_2) = \{ c_1 \}$, $r(tp_3) = \{ c_1, c_2 \}$, $r(tp_4) = \{ c_2 \}$. 
The atomic query decomposition is $D^*(P,F) = \{ (tp_1, \{c_1\}),$ $(tp_2, \{c_1\}), (tp_3, \{c_1, c_2\}),$ $(tp_4, \{ c_2\}) \}$ and the corresponding graph is shown in \cref{fig:d_star}. In $P$, the triple patterns $tp_1$ and $tp_2$ form an exclusive group, as they are both only answerable by service $c_1$. 
Therefore, we can combine them in a single subexpression without reducing the expected completeness in $D_1(P,F) = \{ ((tp_1\; \textsc{And}\; tp_2),$ $\{c_1\}), $ $(tp_3, \{c_1, c_2\}), $ $(tp_4, \{c_1\})  \}$. 
The corresponding graph shown in \cref{fig:d_1} is identical to $G_{D^*(P,G)}$ and thus its expected completeness is $density(D_1(P,F)) = \frac{11}{11} = 1$. 
Alternatively, we can choose to evaluate $tp_3$ only at service $c_2$ with $D_2(P,F) = \{ ((tp_1\; \textsc{And}\; tp_2), \{c_1\})$ $((tp_3\; \textsc{And}\; tp_4), \{c_2\}) \}$ (\cref{fig:d_2}) or evaluate $tp_3$ at service $c_1$ with $D_3(P,F) = \{ ((tp_1\; \textsc{And}\; tp_2\; \textsc{And}\; tp_3),$ $ \{c_1\}) , (tp_4, \{c_2\}) \}$ (\cref{fig:d_3}). 
Since both corresponding decomposition graphs have fewer edges than the graph of $G_{D^*(P, F)}$, we expect fewer answers because: $density(D^*(P, F)) > density(D_2(P, F)) > density(D_3(P, F)) $.
\end{example}

\smallbreak
\noindent 
\textit{Query Decomposition Cost. } 
The example illustrates how query decompositions can have different levels of expected completeness.
Ideally, one would always choose the atomic query decomposition to guarantee complete answers. 
However, there are also costs associated with the evaluation of a decomposition that are induced by the amount of transferred data for intermediate results during query execution as well as the number of services that need to be contacted.
In federations of SPARQL endpoints, both goals are achieved by
\begin{enumerate*}[label=\roman*)]
    \item decomposing the query into as few subexpressions as possible, and
    \item reducing the number of endpoints contacted by selecting just those sources which are likely to contribute to the final answer of the query. 
\end{enumerate*} 
In contrast, when facing heterogeneous federations of LDF services, the languages of the LDF services need to be considered as well.
The reason is that the interface-compliant evaluation might yield additional costs in cases when subexpressions cannot be evaluated by a service as a whole. 
There might be several interface-compliant evaluations for an expression because the original expression could be split in different ways into subexpressions that can be evaluated by the service.
We denote an interface-compliant evaluation of an expression $P$ with the minimal number of subexpressions as  $\theta^*_c(P)$, which is the evaluation of $P$ that requires separating the expression into the fewest subexpression to be interface-compliant.
Note that $|\theta^*_c(P)| = 1$, if $P \in L_c$.

%

We propose a lower bound for query decomposition cost that considers the number of services contacted and the number of subexpressions in an interface-compliant evaluation of the decomposition.
In particular, this lower bound combines: 
\begin{enumerate*}
    \item The number of sources $|S|$ to be contacted per subexpression.
    \item The number of additional subexpressions ($|\theta^*_{c}(SE)| - 1$) required for an interface-compliant evaluation for each subexpression and all corresponding sources.
\end{enumerate*}
\begin{definition}[Cost of a Query Decomposition]
The cost of evaluating a query decomposition $D(P,F)$ is given by 
$$
cost(D(P,F)) = \sum\limits_{(SE, S) \in D(P, F)} |S| + \sum \limits_{(SE, S) \in D(P, F) \land \forall c \in S} (|\theta^*_{c}(SE)| - 1).
$$
\end{definition}

Note that the proposed query decomposition cost provides a lower bound for evaluating a decomposition while computing the exact cost requires knowledge about the technical configurations of the services in the federation. 
For instance, obtaining solutions from TPF servers might require several requests for paginating the results, while a single request might suffice on a SPARQL endpoint.
%
%
\begin{example}
Let us consider the decomposition $D_2(P, F)$ from \cref{ex:decomp_graphs} and the subexpression $SE_1 =$ $(tp_1\;$ $\textsc{And}\; tp_2\; \textsc{And}\; tp_3)$ to be evaluated at source $S_1 = \{ c_1 \}$.
In contrast to its density, the cost of evaluating $D_2(P, F)$ depends on the LDF interface $c_1$ implements.
If $c_1$ is a SPARQL endpoint, i.e. $int(c_1) = (L_{\textsc{CoreSparql}}, M_{\textsc{Ep}})$, the evaluation $\llbracket SE_1 \rrbracket_{c_1}$ is interface-compliant and thus $|\theta^*_{c_1}(SE_1)| = 1$.
However, if $c_1$ is a TPF server, i.e. $int(c_1) = (L_{\textsc{Tp}}, M_{\textsc{Tpf}})$, the interface-compliant evaluation of $SE_1$ would require evaluating the triple patterns individually with $\theta^*_{c_1}(SE_1) = \llbracket tp_1 \rrbracket_{c_1} \bowtie \llbracket tp_2 \rrbracket_{c_1} \bowtie \llbracket tp_3 \rrbracket_{c_1}$ and thus $|\theta^*_{c_1}(SE_1)| = 3$.
Hence, the evaluation at the TPF server requires two additional subexpressions to be evaluated.
This may lead to higher execution costs as there are potentially more intermediate results to be transferred and the service needs to be contacted at least two additional times.
\end{example}

Finally, we can combine both the density and cost of a query decomposition into the query decomposition problem which aims to obtain a query decomposition that maximizes the expected answer completeness while minimizing the execution cost.

\begin{definition}[Query Decomposition Problem]
Given a BGP $P$ and a federation $F=(C, int, ep)$, the query decomposition problem is finding a query decomposition $D(P,F)$ that minimizes the execution cost while maximizing its density:
$$
D(P, F) = \argmax density(D(P,F)) \land \argmin cost(D(P,F))
$$
\end{definition}

Note that this problem is a multi-objective optimization problem, where there might not be a single best solution but rather a set of optimal trade-off solutions, i.e. Pareto-optimal solutions.

\begin{example}
Consider two alternative example federations which differ in the LDF interfaces of their services $c_1$ and $c_2$:
\begin{itemize}[leftmargin=0.7\parindent]
\item[] $F_1 = ( \{ c_1, c_2\}, int, ep )$ with $int(c_1) = int(c_2) =$\\ $(L_{\textsc{CoreSparql}}, M_{\textsc{Ep}})$.
\item[] $F_2 = ( \{ c_1, c_2\}, int, ep )$ with $int(c_1) = (L_{\textsc{Tp}}, M_{\textsc{Tpf}})$ and  $int(c_2) = (L_{\textsc{CoreSparql}}, M_{\textsc{Ep}})$.
\end{itemize}
The density and cost for the query decompositions are given in the following table, where the best values are indicated in bold.

\vspace{1mm}
\begin{footnotesize}
\begin{center}
\begin{tabular}{lcccccccc}
\toprule
& \multicolumn{4}{c}{$F_1$} & \multicolumn{4}{c}{$F_2$} \\
 \cmidrule(lr){2-5}  \cmidrule(lr){6-9}
                        &   $D^*$   &   $D_1$   &   $D_2$   &   $D_3$ &   $D^*$   &   $D_1$   &   $D_2$   &   $D_3$   \\
\midrule
$\sum |S|$                &   5       &   4       &   2       &   2       &   5       &   4       &   2       &   2       \\
$\sum (|\theta^*_c(SQ)| - 1)$   &   0       &   0       &   0       &   0       &   0       &   1       &   1       &   2       \\
\midrule
$cost$                    &   5       &   4       &   \textbf{2}       &   \textbf{2}       &   5       &   5       &   \textbf{3}       &   4       \\ 
\midrule
$comp$                    &   \textbf{1}       &   \textbf{1}       &   $\frac{9}{11}$     &   $\frac{8}{11}$       &   \textbf{1}       &   \textbf{1}       &   $\frac{9}{11}$       &   $\frac{8}{11}$       \\ 
\bottomrule
\end{tabular}
\end{center}
\end{footnotesize}

\vspace{1mm}
\end{example}
The decomposition cost in the example shows how both the number of subexpressions and the number of sources they are evaluated at ($\sum |S|$) as well as the capabilities of the interface $\sum (|\theta^*_c(SE)| - 1)$ have an impact on the overall cost. 
Further, it shows the trade-off between the two conflicting objectives density and cost. 
In both federations, the decompositions that yield the highest density also have the highest cost and vice versa.
Approaches to solving the query decomposition problem need to determine solutions that yield a suitable (depending on the use case) trade-off between the number of answers and execution cost.
According to \cref{def:decomposition_completeness}, two main factors impact on the density.
First, the triple patterns should be evaluated at as many relevant sources as possible (\ref{item:rule1}).
Second, the more fine-grained the subexpressions for triple patterns that have several relevant sources in common, the higher the density (\ref{item:rule2}).
Similarly, the costs of decompositions originate from two main aspects.
First, contacting fewer sources with larger subexpressions will reduce costs and, second, decomposing the query into subexpressions that are interface-compliant will reduce the cost.
One way of pruning sources without affecting answer completeness is to determine the relevant sources that do not contribute to the final answers of the query \cite{DBLP:conf/esws/SaleemN14}.
However, this can be very challenging for queries with triple patterns that contain terms from common ontologies (e.g., RDF/S, OWL), as they can be answered by many of the sources in the federation.
For this purpose, some approaches rely on pre-computed statistics/catalogues \cite{DBLP:conf/semweb/MontoyaSH17,DBLP:conf/esws/SaleemN14,DBLP:conf/i-semantics/0002PSHN18} and/or the query capabilities of SPARQL endpoints, such as \textsc{Ask} queries \cite{DBLP:journals/tlsdkcs/VidalCAMP16}.  
We propose a query decomposition approach that can be combined with a heuristic-based source pruning method and can be applied for any heterogeneous federation.

\smallbreak
\noindent
\textbf{Query Decomposition Approach.} 
We propose an approach that does not rely on specific statistics about the members and has two central goals:
\begin{enumerate*}[label=(\arabic*)]
    \item maximize the density by evaluating all triple patterns at the relevant sources that contribute to the final answers, and
    \item reduce the execution cost by obtaining subexpressions that leverage the capabilities of the services as much as possible. 
\end{enumerate*}
Furthermore, we add an optional source pruning step to further decrease cost by reducing the number of sources contacted.
\captionsetup*[subfigure]{position=bottom,textfont=normalfont,labelfont=normalfont}
\begin{figure*}[t]
    \centering
    \subfloat[
    Join Ordering and Union Expressions]{\includegraphics[width=0.22\textwidth]{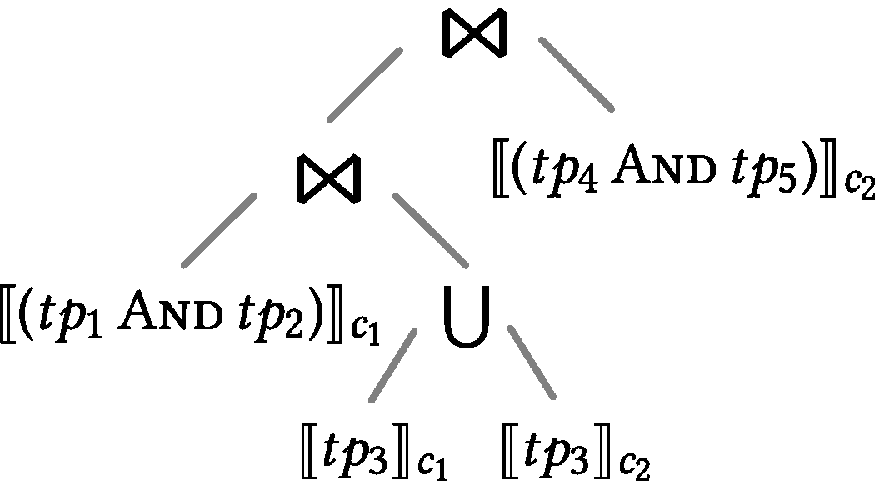}\label{fig:lp_1}}
    \hspace{16mm}
    \subfloat[
    Interface-compliant subquery plans]{\includegraphics[width=0.24\textwidth]{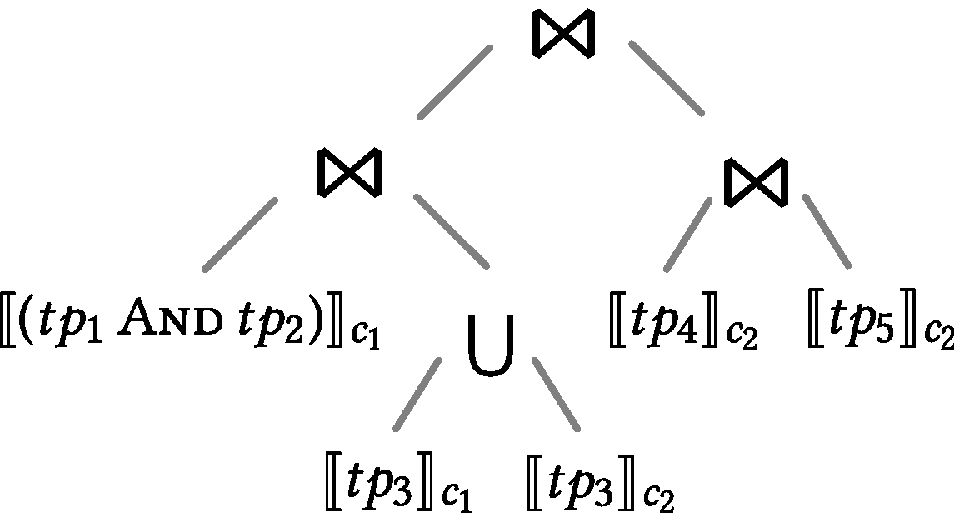}\label{fig:lp_3}}
    \hspace{16mm}
    \subfloat[
    Place Physical Operators]{\includegraphics[width=0.24\textwidth]{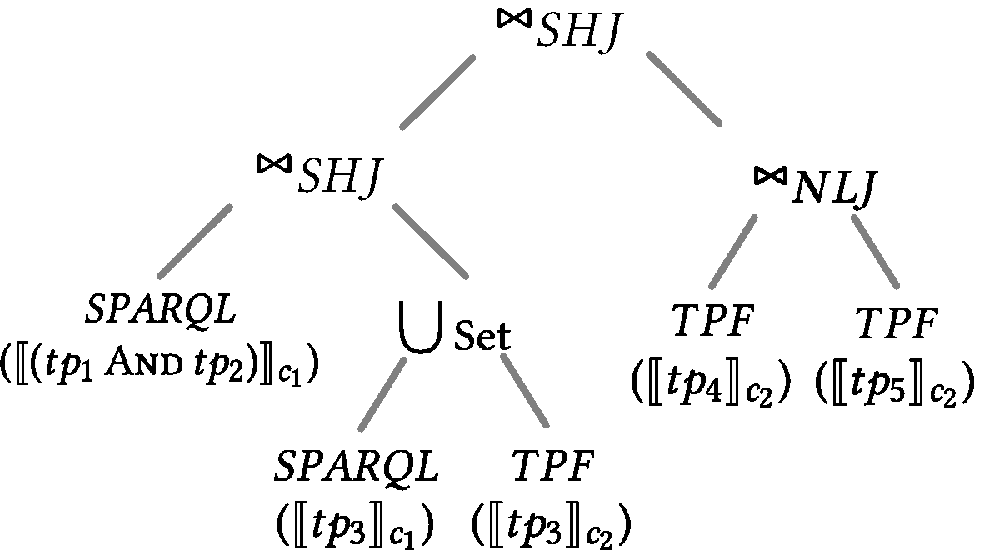}\label{fig:lp_4}}

    \vspace{-0.5\baselineskip}
    \caption{Query planning steps for the query decomposition $D_4(P, F_{ex}) = \{ ((tp_1\, \textnormal{\textsc{And}}\, tp_2), \{c_1\}) , (tp_3, \{c_1, c_2\}), ((tp_4\, \textnormal{\textsc{And}}\, tp_5),\{c_2\})$.}
    \label{fig:example_plan}
\end{figure*}
\begin{algorithm}[t!]
\footnotesize
  \caption{Interface-aware Query Decomposer}
  \label{alg:query_decomposer}
  \SetAlgoLined
  \LinesNumbered
  \KwIn{BGP P = $\{tp_1, \dots, tp_n\}$, Federation $F = (C, int, ep)$}
  \DontPrintSemicolon
  \SetKwRepeat{Do}{do}{while}
  \SetKw{Break}{break}
  \SetKwFunction{relevantSources}{relevantSources}
  \SetKwFunction{pruneSources}{pruneSources}
    $D = \emptyset$\;
    \For{$tp \in P$}{ \label{line:start_dstar}
        $S =$ \relevantSources($tp$)\; \label{line:relevant_sources}
        $D = D \cup \{ (tp, S)\}$
    } \label{line:end_dstar}

    $D = \pruneSources(D)$\; \label{line:pruning}
    
    \Do{$updated$}{
        $updated = False$\;
        \For{$\forall (SE_i, S_i), (SE_j, S_j) \in D \land SE_i \neq SE_j$}{
            \uIf{$|vars(SE_i) \cap vars(SE_j)| > 0$  \label{line:cond1} \\ 
            $\land |S_i \cup S_j| = 1$  \label{line:cond2}  \\
            $\land (SE_i\; \textsc{And}\; SE_j) \in L_c,\, \forall c \in S_i$ \label{line:cond3} 
            } 
            { \label{line:conditions}
                $D = D \setminus \{ (SE_i, S_i), (SE_j, S_j) \}$\; 
                $D = D \cup \{ ((SE_i\; \textsc{And}\; SE_j), S_i)\}$\;
                $updated = True$\;
                \Break\;
            }
        } 
    }

    \KwRet{$D$}\; \label{line:return_decomposition}
\end{algorithm}
The decomposer is outlined in \cref{alg:query_decomposer}.
Its inputs are a BGP $P$ and a federation $F = (C, int, ep)$.
First, the algorithm creates the atomic decomposition by iterating over each triple patterns $tp$ in $P$, determines the set of relevant sources as $S$, and adds $(tp, S)$ to the decomposition $D$ (\cref{line:start_dstar} - \cref{line:end_dstar}). 
Next, the relevant sources per triple pattern can be pruned in \cref{line:pruning}.
This pruning step is not required, however, it allows for reducing the decomposition cost by 
\begin{enumerate*}[label=\roman*)]
    \item reducing the number of sources to be contacted, and 
    \item allowing to group more triple patterns into subexpressions in the following steps.
\end{enumerate*}
The source pruning approach is interchangeable and we detail our source pruning heuristic in the next paragraph.  
After pruning the sources, the algorithm tries to merge as many subexpressions in the decomposition $D$ as possible.
All possible combinations of subexpressions $(SE_i, S_i)$ and $(SE_j, S_j)$ are considered and merged if they fulfill the following three conditions: 
\begin{enumerate}[label=Condition \Roman*,leftmargin=5\parindent]
    \item Both subexpressions have variables in common:\newline $|vars(SE_i) \cap vars(SE_j)| > 0$.  (\cref{line:cond1})
    \item Both subexpressions have exactly one source in common: $|S_i \cup S_j| = 1$. (\cref{line:cond2})
    \item The common source $c$ can evaluate the conjunction of both expressions: $(SE_i\; \textsc{And}\; SE_j) \in L_c$. (\cref{line:cond3})
\end{enumerate}
If two subexpressions fulfill all conditions, the individual subexpressions are removed from the decomposition $D$ and their conjunction is added to $D$.
This process is repeated until no more subexpressions can be merged ($updated = False$). 
A central property of the query decomposition generated by the algorithm is the fact that the evaluation of all subexpressions is compliant with all corresponding sources.
That is, $\forall (SE, S) \in D(P, F): \forall c \in S: \theta_c(SE) = \llbracket SE \rrbracket_c$.
As a result, the interface-compliant evaluation (\cref{def:compliant_decomp_eval}) of all decomposition generate by \cref{alg:query_decomposer} is given as
$$
\theta_{D(P, F)}(P): =\;  \Bowtie_{(SE_i, S_i) \in D(P, F)}( \cup_{c_j \in S_i}\; \ldqm{SE_i}_{c_j} ).
$$
Note that this property does not require the query planner to find the subexpression minimizing evaluation $\theta_c^*(SE_i)$.

\smallbreak
\noindent
\textbf{Source Pruning Approach.} 
We propose a heuristic that leverages the atomic decomposition graph $G_{D^*(P,F)} = (V^*, E^*)$ and does not rely on data statistics.
Our approach iterates over the source vertices $c_i \in V^*$ by non-increasing out-degree (i.e. starting with the most \emph{popular} source).
For each triple pattern $tp_j$ connected to $c_i$ ($(c_i, tp_j) \in E^*$), the edges to all other sources are removed for $tp_j$: $E^* = E^* \setminus \{(c_k, tp_j) \in E^* \mid \forall c_k \neq c_i \}$.
In addition, the relevant sources for triple patterns with the same common subject are not pruned to maximize completeness.
The rationale for this is the observation that RDF datasets typically follow entity-centric descriptions, where the URI of an entity appears in the subject of triples in the authoritative dataset. 
For example, triples with subject \texttt{dbr:Berlin} are all part of the DBpedia dataset.




\subsection{Query Planner}
\label{sec:query_planner}

The main tasks of the query planner are finding an efficient logical plan and placing physical operators such that the execution time of the query plan is minimized. 
For both tasks, common cost-based query planners leverage statistics on the data of the members in the federation.
In heterogeneous federations, however, the query planning approaches cannot always rely on the same level of statistics from all sources and need to be adjusted to the statistics available at the individual sources.
For instance, obtaining fine-grained statistics might require access to the entire dataset of a source for efficient computation \cite{DBLP:conf/esws/HelingA20} or require the services to be able to execute complex SPARQL expressions, such as aggregate queries.
Furthermore, in the case that the interface language of an LDF service does not support the evaluation of a subexpression from the decomposition, the planner needs to obtain an efficient subplan for evaluating the subexpression over that service.
In this section, we first discuss the steps necessary to obtain efficient query plans, and thereafter, we propose a query planner for query decompositions that respects the interface restrictions in heterogeneous federations.
%

\smallbreak
\noindent
\textit{Join Ordering with Union Expressions. }
The query planner determines a join ordering for the subexpressions in a  decomposition that minimizes the number of intermediate results. 
The challenge lies in estimating the size of intermediate results from subexpressions and joins.  
This is particularly difficult in heterogeneous interfaces due to two factors. 
First, the methods to estimate cardinalities depend on the interface languages and the metadata supported by the interfaces.  
For example, determining the cardinality of a subexpression comprised of two triple patterns could be achieved by a \textsc{Count} query if the interface, e.g. a SPARQL endpoint, supports the evaluation of such expressions. 
However, this could be an expensive operation on the server and thus time-consuming for the client. 
Moreover for other interfaces, such as TPF servers, this would not be possible and the cardinality would need to be estimated according to the metadata of the triple patterns.
If available, statistical data on the data distribution could be used alternatively to estimate the number of intermediate results \cite{DBLP:conf/i-semantics/0002PSHN18,DBLP:conf/i-semantics/CharalambidisTK15}.  
Second, federated plans comprise union operators to combine data from alternative relevant sources.   
In this case, estimating the number of intermediate results from a union operator that will contribute to a join is more difficult due to the different data distributions in each source. 
Therefore, the planner must devise appropriate join orderings in the presence of unions from different sources.  
%
\cref{fig:lp_1} shows a join ordering with unions for a query decomposition from the query and federation of our motivating example: $D_4(P, F_{ex}) = \{ ((tp_1\; \textsc{And}\; tp_2), \{c_1\}) , (tp_3, \{c_1, c_2\}), ((tp_4\; \textsc{And}\; tp_5),\{c_2\})$.
%

\smallbreak
\noindent 
\textit{Interface-compliant Subexpression Plans. }        
%
If a decomposer does not provide decompositions in which the subexpressions $SE$ are interface-compliant, the query planner additionally needs to find subplans that evaluate the subexpression in an interface-compliant manner.
 In those cases, the query planner needs to break down $SE$ into subexpressions that minimize the cost of the interface-compliant evaluation $\theta^*_c(SE)$.
Since the resulting interface-compliant evaluation consists of several joins, the query planner also needs to determine the join ordering for $\theta^*_c(SE)$.
For example, if the service is a TPF server, this would require first splitting the subexpression into its individual triple patterns and thereafter, finding an appropriate join ordering. 
The latter could rely on existing query planning approaches for TPF servers \cite{DBLP:journals/ws/VerborghSHHVMHC16,DBLP:conf/semweb/AcostaV15}.
\cref{fig:lp_3} shows the interface-compliant evaluation for $\llbracket (tp_4\, \textsc{And}\, tp_5) \rrbracket_{c_2}$ over the DBpedia TPF server ($c_2$) for decomposition $D_4(P, F_{ex})$. 
The evaluation is given by $\theta^*_{c_2}(tp_4\, \textsc{And}\, tp_5) = \llbracket tp_4 \rrbracket_{c_2} \bowtie  \llbracket tp_5 \rrbracket_{c_2}$ and it introduces an additional join operation in the query plan. 

\smallbreak
\noindent 
\textit{Placing Physical Operators. }       
Finally, the query planner selects physical operators to obtain an executable physical query plan. 
This includes placing access operators that retrieve the solution mappings from the services as well as physical join and union operators to process the intermediate results.
The access operators transform the subexpressions into requests that can be processed by the corresponding LDF services. 
Ideally, the access operators leverage the querying capabilities of the interfaces such that the results are obtained efficiently.
For example, traditional federated query engines for  SPARQL endpoints require only access operators that adhere to the SPARQL protocol to get solution mappings from the endpoints.
In heterogeneous federations, however, appropriate access operators for each LDF interface in the federation need to be implemented and placed accordingly by the planner. 
Moreover, physical join operators that implement different join strategies, such as symmetric hash join or bind join, need to be placed effectively as they incur different costs. 
Finally, the planner needs to place the appropriate physical union operators in the plan that respects the semantics of the query language.   
\cref{fig:lp_4} shows an example of a physical plan for decomposition $D_4(P, F_{ex})$, where service $c_1$ is a Wikidata SPARQL endpoint and service $c_2$ a the DBpedia TPF server.
\smallbreak
\noindent
\textbf{Query Planning Approach.} 
We now present a heuristic-based query planner for heterogeneous federations.
In particular, it relies on decomposition obtained by \Cref{alg:query_decomposer}.
First, we present the overall planning approach and, thereafter, we present details of our prototypical implementation.
%
The query planner is outlined in \cref{alg:query_planner}.
It starts by estimating the cardinality of each subexpression in the decomposition (\cref{line:estimate_cardinality}) and creates a list $L$ in which the subexpressions are sorted by non-decreasing cardinality (\cref{line:sort_q}).
The query planner starts building the query plan with the subexpression $d_1$ with the lowest cardinality and creates the corresponding access plan $T_1$ (\cref{line:access_plan_p1})\footnote{The access plan for $d_i = (SE_i,S_i)$ refers to the union of evaluating subexpression $SE_i$ at each source in $S_i$.}.
It iterates over the remaining subexpressions in $L$ and determines the next subexpression to join $T_1$ with. 
This is either a remaining subexpression with the lowest cardinality and a common variable (\cref{line:join_q2}) or if there is no join remaining in the BGP, it is the subexpression with the lowest cardinality (\cref{line:next_q2}).
Once the subexpression $d_2$ is selected, the access plan $T_2$ for $d_2$ is created (\cref{line:access_plan_p2}) and the appropriate physical join operator $O$ is determined (\cref{line:get_physical_operator}).
Finally, $T_1$ becomes the \texttt{JoinPlan} of $T_1$ and $T_2$ (\cref{line:join_p1_p2}).
When $L$ is empty, the final plan $T_1$ is returned (\cref{line:return_plan}).
%
\begin{algorithm}[t!]
\footnotesize
  \caption{Query Planning Algorithm}
  \label{alg:query_planner}
  \SetAlgoLined
  \LinesNumbered
  \KwIn{Decomposition $D(P,F)= \{ (SE_1, S), \dots , (SE_n, S_n)  \}$}
  \DontPrintSemicolon
  \SetKwFunction{estimateCardinality}{estimateCardinality}
  \SetKwFunction{sort}{sort}
    \SetKw{Break}{break}
  \SetKwFunction{getPhysicalOperator}{getPhysicalOperator}
  \SetKwFunction{AccessPlan}{AccessPlan}
  \SetKwFunction{JoinPlan}{JoinPlan}
 
    List $L$\;
    \For{$(SE_i, S_i) \in D(P, F)$}{
        $card_i = \estimateCardinality(SE_i, S_i)$\; \label{line:estimate_cardinality}
        $L.append((SE_i, S_i, card_i)) $\;
    }
    $L = \sort(L, card_i)$ \tcp*[h]{Sort $L$ by non-decreasing $card_i$}\; \label{line:sort_q}
    
    $d_1 = L.get(0)$\;
    $L.remove(0)$\;
    $T_1 = \AccessPlan(d_1)$\; \label{line:access_plan_p1}
    
    \While{$|L| > 0$}{
        $d_2 = L.get(0)$\; \label{line:next_q2}
        \For{$i = 1; i < |L|; i++$}{
            $(SE_i, S_i, card_i) = L.get(i)$\;
            \uIf{$|vars(T_1) \cap vars(SE_i)| > 0$}{
                $d_2 = (SE_i, S_i, card_i)$\; \label{line:join_q2}
                $L.remove(i)$\;
                \Break \;
            }
        }
        $T_2 = \AccessPlan(d_2)$\; \label{line:access_plan_p2}
        $O = \getPhysicalOperator(T_1, T_2)$\; \label{line:get_physical_operator}
        $T_1 = \JoinPlan(T_1, T_2, O)$\;  \label{line:join_p1_p2}
    }
    
    \KwRet{$T_1$}\; \label{line:return_plan}
\end{algorithm}
%
After presenting the generic planning approach, we now provide details on the specific steps in our prototypical implementation.
The current implementation focuses on the three well-known LDF interfaces: Triple Pattern Fragments (TPF), Bindings-Restricted Triple Pattern Fragments (brTPF), and SPARQL endpoints. 
Further, it relies on the properties of decompositions generated by our interface-aware query decomposer presented in \cref{alg:query_decomposer}. 
That is, each subexpression $SE_i$ is interface-compliant for all sources in $S_i$.
%
For each service $c \in S_i$, \texttt{estimateCardinality} (\cref{line:estimate_cardinality}) obtains the estimated cardinality $card_i^c$ for the subexpression $SE_i$ at the service $c$ in line with the interface language and the metadata of $c$.
As evaluating $SE_i$ at several sources reflects a union operation, it then sums up those individual cardinalities to obtain the total cardinality of $SE_i$ at all sources: $card_i = \sum_{c \in S_i} card^c_i$. 
If $SE_i$ is a triple pattern and the source is a brTPF or a TPF server, we request the triple pattern and use the \texttt{void:count} in the metadata as the cardinality estimation. 
If $SE_i$ is a BGP or a triple pattern and the source is a SPARQL endpoint, we use a \textsc{Count} query to estimate the cardinality. 
Further, we estimate the join cardinality of two subexpressions $SE_i$ and $SE_j$ as the minimum of their cardinalities. 
Next, we implement appropriate access operators for all three interfaces.
Since all subexpressions are compliant with the interface, we do not need to first obtain an interface-compliant evaluation in the \texttt{AccessPlan}s.
Finally, we determine the physical join operator according to the estimated number of requests to execute the join.
We distinguish between two different common join strategies: symmetric hash join and bind join.
The reason to use the number of requests to determine the join strategy is two-fold:
\begin{enumerate*}[label=\roman*)]
    \item the number of requests directly have an effect on the execution time, and
    \item fewer requests lead to a reduced load on the services in the federation.
\end{enumerate*}
Thus, we compare the number of requests necessary when placing a bind join or a symmetric hash join and choose the operator that yields fewer requests.
The request estimations depend on the implementation of the physical join operator, which we detail in the following section.

\subsection{Physical Operators}
\label{sec:polymorphic_join_operator}
The heterogeneity of LDF interfaces in a federation introduces challenges but also opens opportunities for implementing novel physical operators. 
Access operators to retrieve answers from LDF services need to be implemented in efficient ways reducing the load on the LDF services and the time for obtaining results to improve query execution time.
For instance, TPF servers have a \emph{page size} configuration that limits the number of answers that are returned upon a requested triple pattern.
Additionally, many public SPARQL endpoints are configured with fair use policies that can lead to zero or incomplete query results \cite{DBLP:conf/semweb/SouletS19}.
Consequently, implementations of access operators for SPARQL endpoints should not overload the SPARQL endpoints and adhere to the usage policies.
Yet, physical join operators can be designed to simultaneously handle different LDF interfaces and follow different join strategies depending on the capabilities of the underlying services. 
We call these kinds of operators \textit{polymorphic} and present a novel Polymorphic Bind Join tailored to TPF, brTPF, and SPARQL interfaces.  
%

\smallbreak
\noindent
\textbf{Polymorphic Bind Join.}
The Polymorphic Bind Join (PBJ) implements a Nested Loop Join algorithm that is able to adjust its join strategy according to the LDF interface.
It simultaneously executes a tuple- and block-based nested loop join according to the supported interface language. 
Our current implementation supports the languages $L_{\textsc{Tp}}$, $L_{\textsc{Tp+Values}}$ and $L_{\textsc{CoreSparql}}$.
By leveraging the capabilities of each service, PBJ reduces the number of requests when accessing more capable sources using the block-based approach.
In particular, PBJ is designed for cases where the inner relation is either an access operator or the union of access operators. 
For each LDF interface $f$, a block size $B_f$ is defined.
During the execution, the operator keeps a reservoir per service that is filled by tuples from the outer relation.
When the reservoir reaches the block size $B_f$ of the corresponding LDF interface, the bindings from the reservoir are requested at the services.
For example, when querying a TPF server in a nested loop join, each solution mapping of the outer relation is used to instantiate and resolve the triple pattern of the inner relation, hence,  $B_{\textsc{Tpf}} = 1$.
However, as the interface languages of brTPF servers and SPARQL endpoints support SPARQL values expressions, the PBJ changes its operation accordingly by requesting a triple pattern or a subexpression with several bindings.
The number of bindings that can be sent to a brTPF server $B_{\textsc{brTpf}}$ depends on the server configuration \cite{DBLP:journals/corr/HartigA16}.
For SPARQL endpoints, $B_{\textsc{Ep}}$ is not limited, yet too many values may lead to long runtimes at the endpoint and potentially incomplete results.\footnote{In our implementation, we set $B_{\textsc{brTpf}} = 30$ and $B_{\textsc{Ep}} = 50$, to reduce the requests while not overloading the endpoint.}

The proposed query planner selects a Symmetric Hash Join (SHJ) or Polymorphic Bind Join (PBJ) in \texttt{getPhysicalOperator} (\cref{line:get_physical_operator}) depending on the estimated number of requests.
The number of requests to execute the SHJ or PBJ depends on the sub-plans $T_1$ and $T_2$. 
If $T_1$ is an \texttt{AccessPlan}, the number of requests to obtain the tuples of $T_1$ are determined by its cardinality $card_{T_1}$ and the interfaces over which $T_1$ is evaluated.
Otherwise, if $T_1$ is a \texttt{JoinPlan}, no additional requests are necessary to obtain the tuples for $T_1$.
For the first case, the requests $R_{acc}(T_1)$ depend on the maximum number of tuples that can be obtained per requests from the corresponding LDF service, which we denote as $Max_{\textsc{Ep}}$, $Max_{\textsc{brTpf}}$, and $Max_{\textsc{Tpf}}$.\footnote{In our implementation, we set $Max_{\textsc{brTpf}} = 100$~\cite{DBLP:journals/corr/HartigA16} and $Max_{\textsc{Tpf}} =100$~\cite{DBLP:journals/ws/VerborghSHHVMHC16}, and $Max_{\textsc{Ep}} = 10000$ (most common value reported at \url{https://sparqles.ai.wu.ac.at/}).}

\begin{equation*}
\begin{aligned}
\scriptstyle
R_{acc}(T) = 
\sum \limits_{\substack{c \in S \land \\ int(c) = \textsc{Ep}}}
\left\lceil \frac{card_T^c}{Max_{\textsc{Ep}}} \right\rceil +  
\sum \limits_{\substack{c \in S \land \\ int(c) = \textsc{brTpf}}}
\left\lceil \frac{card_T^c}{Max_{\textsc{brTpf}}} \right\rceil +  
\sum \limits_{\substack{c \in S \land \\ int(c) = \textsc{Tpf}}}
\left\lceil \frac{card_T^c}{Max_{\textsc{Tpf}}} \right\rceil  
\end{aligned}
\end{equation*}
As a result, we can compute the number of request for the SHJ as the sum of the requests for the two sub-plans:
$$
R_{SHJ}(T_1, T_2) = R_{acc}(T_1) + R_{acc}(T_2)
$$
For the PBJ, we need to determine the number of requests that need to be performed in the inner relation $R_{bind}(T_1, T_2)$ , which depends on the cardinality $card_{T_1}$ of the outer relation $T_1$ and the block sizes for the services in the inner relation:
\begin{equation*}
\begin{aligned}
\scriptstyle
R_{bind}(T_1, T_2) = 
\sum \limits_{\substack{c \in S_2 \land \\ int(c) = \textsc{Ep}}} 
\left\lceil \frac{card_{T_1}}{B_{\textsc{Ep}}} \right\rceil +  
\sum \limits_{\substack{c \in S_2 \land \\ int(c) = \textsc{brTpf}}}
\left\lceil \frac{card_{T_1}}{B_{\textsc{brTpf}}} \right\rceil +  
\sum \limits_{\substack{c \in S_2 \land \\ int(c) = \textsc{Tpf}}} 
\left\lceil \frac{card_{T_1}}{B_{\textsc{Tpf}}} \right\rceil  
\end{aligned}
\end{equation*}
The overall number of requests for the PBJ is 
$$
R_{PBJ}(T_1, T_2) = R_{acc}(T_1) + R_{bind}(T_1, T_2).
$$

\section{Experimental Evaluation}
\label{sec:experimental_evaluation}
\captionsetup*[subfigure]{position=bottom,textfont=normalfont,labelfont=normalfont}
\begin{figure*}[t]
    \centering
    \hfil
    \subfloat[\fedi]{\includegraphics[width=0.508\textwidth]{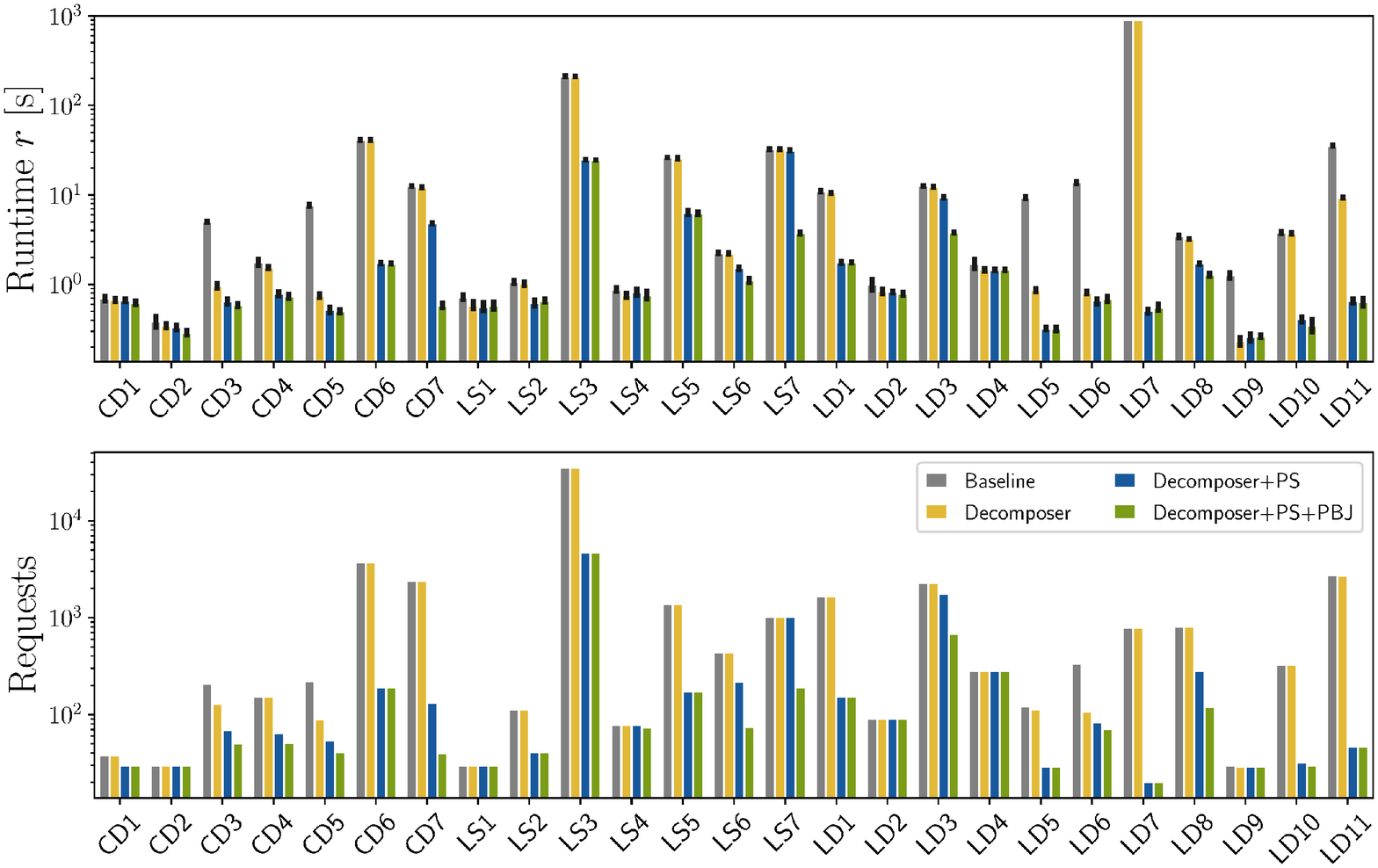}\label{fig:fed1_rr}}
    \hfill
    \subfloat[\fedii]{\includegraphics[width=0.492\textwidth]{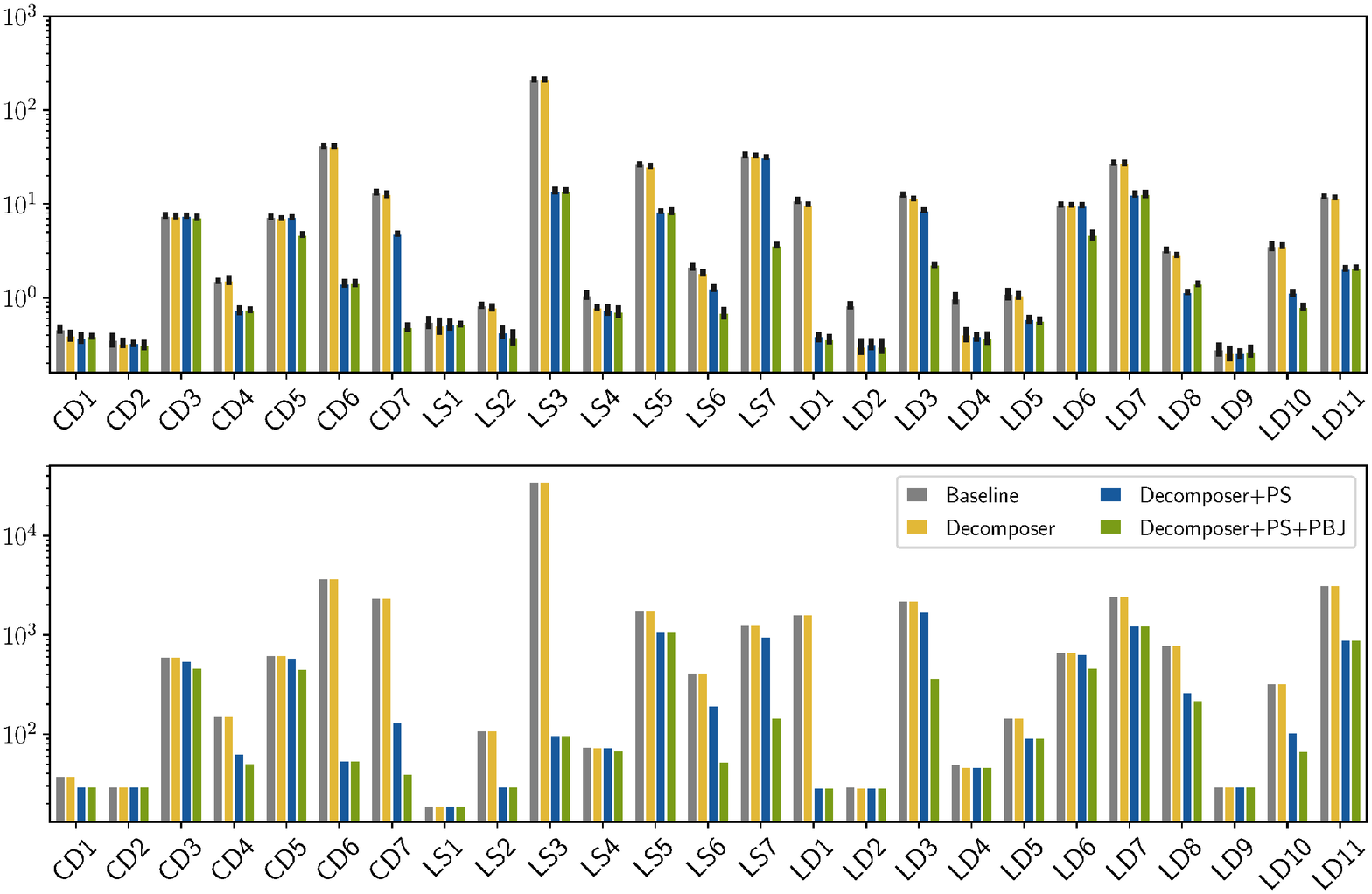}\label{fig:fed2_rr}}
    \vspace{-0.5\baselineskip}
    \caption{Average runtimes [s] (log-scale) and total number of requests (log-scale) for each query and both federations.}
    \label{fig:runtime_requests}
\end{figure*}

We evaluate a prototypical implementation of the interface-compliant query decomposer, query planner, and polymorphic bind join.
The goal is to investigate the impact of the components on the performance when querying heterogeneous federations of LDF interfaces.

\smallbreak
\noindent\textbf{Datasets and Queries.}
We use the well-known FedBench benchmark~\cite{DBLP:conf/semweb/SchmidtGHLST11} which is comprised of $9$ datasets and tailored to assess the performance of federated SPARQL querying strategies. 
We use a total of $25$ queries from Cross Domain (\textsf{CD1-7}), Life Science (\textsf{LS1-7}) and Linked Data (\textsf{LD1-11}) in our evaluation.

\smallbreak
\noindent\textbf{Federations.}
We evaluate our approach on two heterogeneous federations \fedi and \fedii shown in \cref{tab:federations} to study the performance in different scenarios.
The central difference between the federations is that in \fedi the three largest datasets are accessible via SPARQL endpoints while in \fedii they are accessible via TPF servers. 
The other datasets are accessible via TPF or brTPF servers.

\begin{table}[t!]
\centering
\setlength{\tabcolsep}{1pt}
\scriptsize
\caption{Heterogeneous Federations: \fedi, \fedii. SPARQL endpoints indicated in bold and brTPF servers in italic.}
\label{tab:federations}
\begin{tabular}{llllllllll}
\toprule
 & DBpedia & NYTimes & LinkedMDB & Jamendo & GeoNames & SWDF & KEGG & Drugbank & ChEBI \\
\midrule
\fedi & \textbf{\textsc{Sparql}} & \emph{\textsc{brTpf}} & \emph{\textsc{brTpf}} & \textsc{Tpf} & \textbf{\textsc{Sparql}}   & \textsc{Tpf} & \emph{\textsc{brTpf}} & \textsc{Tpf} & \textbf{\textsc{Sparql}}   \\
\midrule
\fedii & \textsc{Tpf} & \emph{\textsc{brTpf}} & \emph{\textsc{brTpf}} & \textbf{\textsc{Sparql}}  & \textsc{Tpf} & \textbf{\textsc{Sparql}}  & \emph{\textsc{brTpf}} & \textbf{\textsc{Sparql}}  & \textsc{Tpf} \\
\bottomrule
\end{tabular}
\end{table}

\smallbreak
\noindent\textbf{Implementation.}
We implemented a prototypical federated query engine for heterogeneous federations that implements the proposed query planner, decomposer, source pruning (\textsf{PS}), and polymorphic bind join (\textsf{PBJ}) operator. 
Our implementation is based on CROP \cite{DBLP:conf/semweb/HelingA20} and implemented in Python 2.7.13. 
As \textsf{Baseline}, we use execute the query plans from our query planner for the atomic decompositions.
The decomposer, source pruning, and PBJ are disabled in the \textsf{Baseline}.
Note that, while Comunica \cite{DBLP:conf/semweb/TaelmanHSV18} can query heterogeneous interfaces, its performance is currently not competitive as it does not implement query decomposition, source pruning, or polymorphic join operators. Therefore, we do not consider Comunica in our evaluation.
We use the \texttt{Server.js} v2.2.3\footnote{\url{https://github.com/LinkedDataFragments/Server.js}} and original Java brTPF server implementation \cite{DBLP:journals/corr/HartigA16} to deploy the TPF and brTPF servers with HDT \cite{DBLP:journals/ws/FernandezMGPA13} backends.
We used Virtuoso v07.20.3229 with the default \texttt{virtuoso.ini} (cf. supplemental material).
All LDF services and the client were executed on a single Debian Jessie server (2x16 core Intel(R) Xeon(R) E5-2670 2.60GHz CPU; 256GB RAM) to avoid network latency.
The timeout was set to 900 seconds. 
After a warm-up run, the queries were executed five times.
The source code, experimental results, and additional material are provided in the supplemental material of this submission.


\smallbreak
\noindent\textbf{Metrics.}
We evaluated the performance by the following metrics: 
\begin{enumerate*}[label=(\roman*)]
    \item \textit{Runtime}: Elapsed time spent by the engine evaluating a query.
    \item \textit{Number of Requests}: Total number of requests submitted to the LDF services during the query execution.
    \item \textit{Number of Answers}: Total number of answers produced. 
    \item \textit{Diefficiency}: Continuous efficiency as the answers are produced over time \cite{DBLP:conf/semweb/AcostaVS17}.
\end{enumerate*}
\begin{table}[t!]
\centering
\setlength{\tabcolsep}{3.5pt}
\footnotesize
\caption{Average total runtime $\sum r$, number of requests $\sum req.$, and answers $\sum ans.$ per run as well as the mean decomposition completeness $\overline{comp}$ and decomposition cost $\overline{cost}$\textsuperscript{\ref{fn:normalized_cost}}.}
\label{tab:evaluation_table}
\begin{tabular}{llrrrrr}

\toprule
&    &      $\sum r$ &      $\sum req.$ &       $\sum ans.$ & $\overline{comp}$ & $\overline{cost}$  \\
\midrule
\multirow{4}{*}{\begin{sideways}\fedi \end{sideways}} & \textsf{Baseline}           &         1337.87 &          54452 &  \textbf{13534} &  \textbf{1.0} &                1.0 \\
& \textsf{Decomposer}        &         1274.15 &          53958 &  \textbf{13534} &  \textbf{1.0} &               0.95 \\
& \textsf{Decomposer+PS}     &           93.66 &           9645 &           13171 &          0.77 &      \textbf{0.55} \\
& \textsf{Decomposer+PS+PBJ} &  \textbf{54.69} &  \textbf{7271} &           13171 &          0.77 &      \textbf{0.55} \\
\midrule
\multirow{4}{*}{\begin{sideways}\fedii \end{sideways}} & \textsf{Baseline}            &          433.37 &          57671 &  \textbf{13578} &  \textbf{1.0} &                1.0 \\
& \textsf{Decomposer}        &          425.15 &          57662 &  \textbf{13578} &  \textbf{1.0} &                0.9 \\
& \textsf{Decomposer+PS}     &          116.45 &           9040 &           13171 &          0.77 &      \textbf{0.53} \\
& \textsf{Decomposer+PS+PBJ} &  \textbf{69.38} &  \textbf{6121} &           13171 &          0.77 &      \textbf{0.53} \\
\bottomrule
\end{tabular}
\end{table}

\subsection{Experimental Results}
\label{sec:experimental_results}
%
We start providing an overview of the performance of the different components.
In \cref{fig:fed1_rr} and \cref{fig:fed2_rr} the mean runtimes and number of requests are shown per query for \fedi and \fedii.
The values are also summarized in \cref{tab:evaluation_table}.
Considering the impact of the individual components, the results show that enabling the decomposer without pruning the sources and no PBJ (\textsf{Decomposer}), only provides a slight improvement in the runtime over the \textsf{Baseline}, even though all queries yield the same number of requests or less.
This is because, without source pruning, only exclusive groups can be merged by the decomposer.
The results when adding the source pruning approach (\textsf{Decomposer+PS}) show that pruning sources considerably reduces both the runtime and the number of requests for the majority of queries.
The reasons for the improvement are two-fold: 
\begin{enumerate*}[label=\roman*)]
    \item the decomposer can create more and larger subexpressions, and 
    \item fewer services are contacted during the execution of the query plan. 
\end{enumerate*}
Finally, with the polymorphic join operator (\textsf{Decomposer+PS+PBJ}), we observe the lowest overall runtimes and number of requests in both federations.
In \fedi, executing all queries with \textsf{Decomposer+PS+PBJ} is more than $34$ times faster than the \textsf{Baseline} and $6$ times faster in \fedii.
The results show that our interface-aware federated query approaches, that adjust to the specifics of heterogeneous interfaces, can greatly improve the performance in terms of runtime.
Simultaneously, it reduces the load on the servers by requiring fewer requests.  
The results show that the interfaces present in the federation (\fedi vs. \fedii) substantially impact the querying performance when not considering the interfaces' capabilities (\textsf{Baseline}).
Yet, our interface-aware solution (\textsf{Decomposer+PS+PBJ}) enables similar performance results regardless of the interfaces.

\smallbreak
\noindent\textbf{Query Decomposition.}
The results show the effectiveness of the proposed $density$ measure as a proxy for completeness and $cost$ measures as means to assess the expected execution cost of query decompositions. 
In \cref{tab:evaluation_table}, we can observe that, in both federations, the decomposer without source pruning yields complete answers with $\overline{density} = 1.0$, since only exclusive groups are merged (\ref{item:rule3}).
The cost can only be slightly reduced (\fedi: $\overline{cost} = 0.95$ and \fedii: $\overline{cost} = 0.9)$\footnote{Cost values are normalized: $cost(D(P,F)) /cost(D^*(P,F))$\label{fn:normalized_cost}}.
However, adding the source pruning (\textsf{Decomposer+PS}) enables decompositions with about half the cost.
Contacting fewer services reduces the cost but also leads to a reduction in the expected completeness ($\overline{density} = 0.77$) and to fewer answers ($\sum ans$) that are being produced. 
$97\%$ of all answers are still produced when sources are pruned.\footnote{In \fedi, the \textsf{Baseline} does not yield all answers, due to a timeout in \textsf{LD7}.}
These results show that the improvement achieved by the decomposer in its ability to leverage the interfaces' capabilities depends on the source pruning.

\captionsetup*[subfigure]{position=bottom,textfont=normalfont,labelfont=normalfont}
\begin{figure*}
\subfloat[\fedi: \textsf{LS3}]{\includegraphics[width=0.23\textwidth]{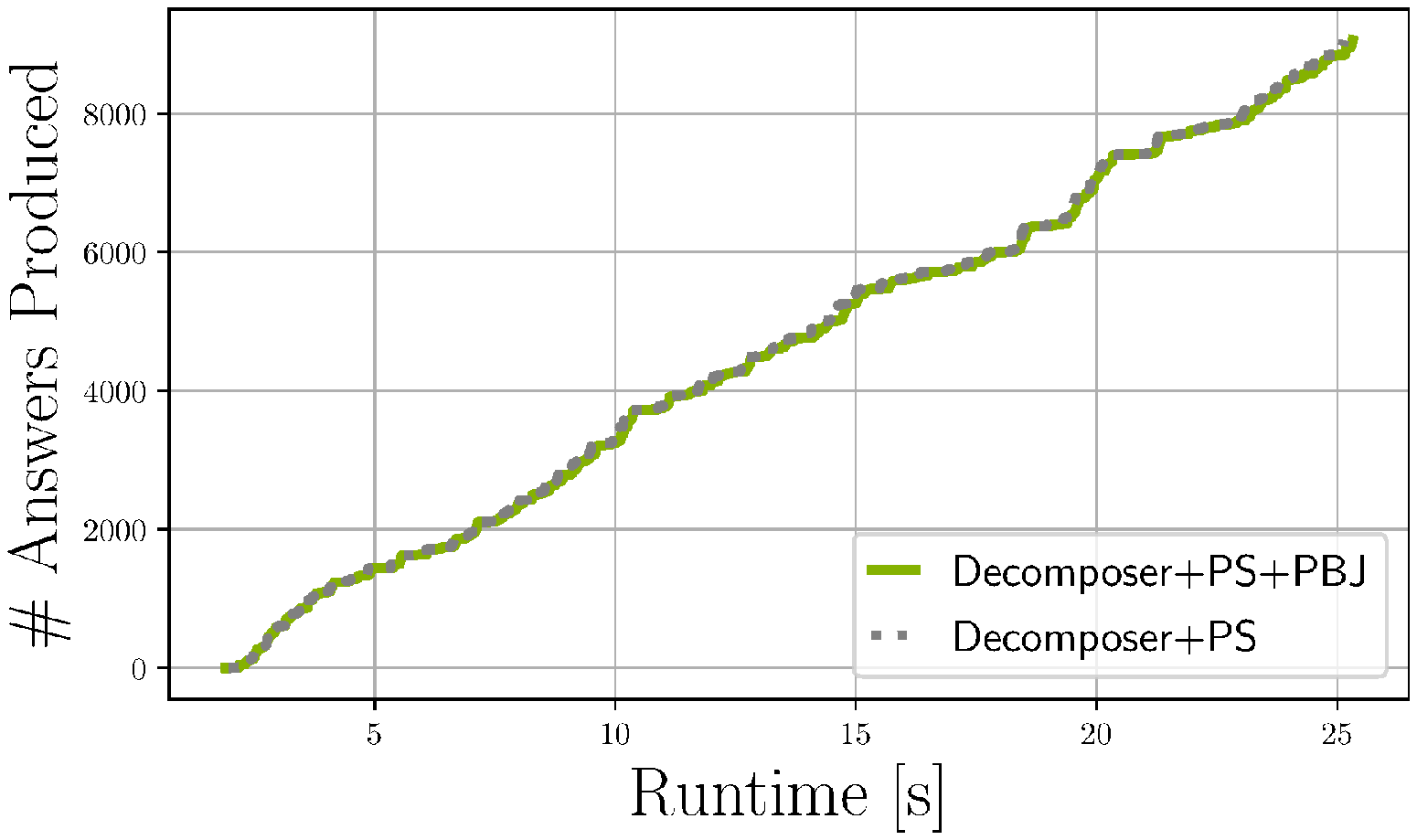}\label{fig:dief_ls3}}
\subfloat[\fedi: \textsf{LD3}]{\includegraphics[width=0.23\textwidth]{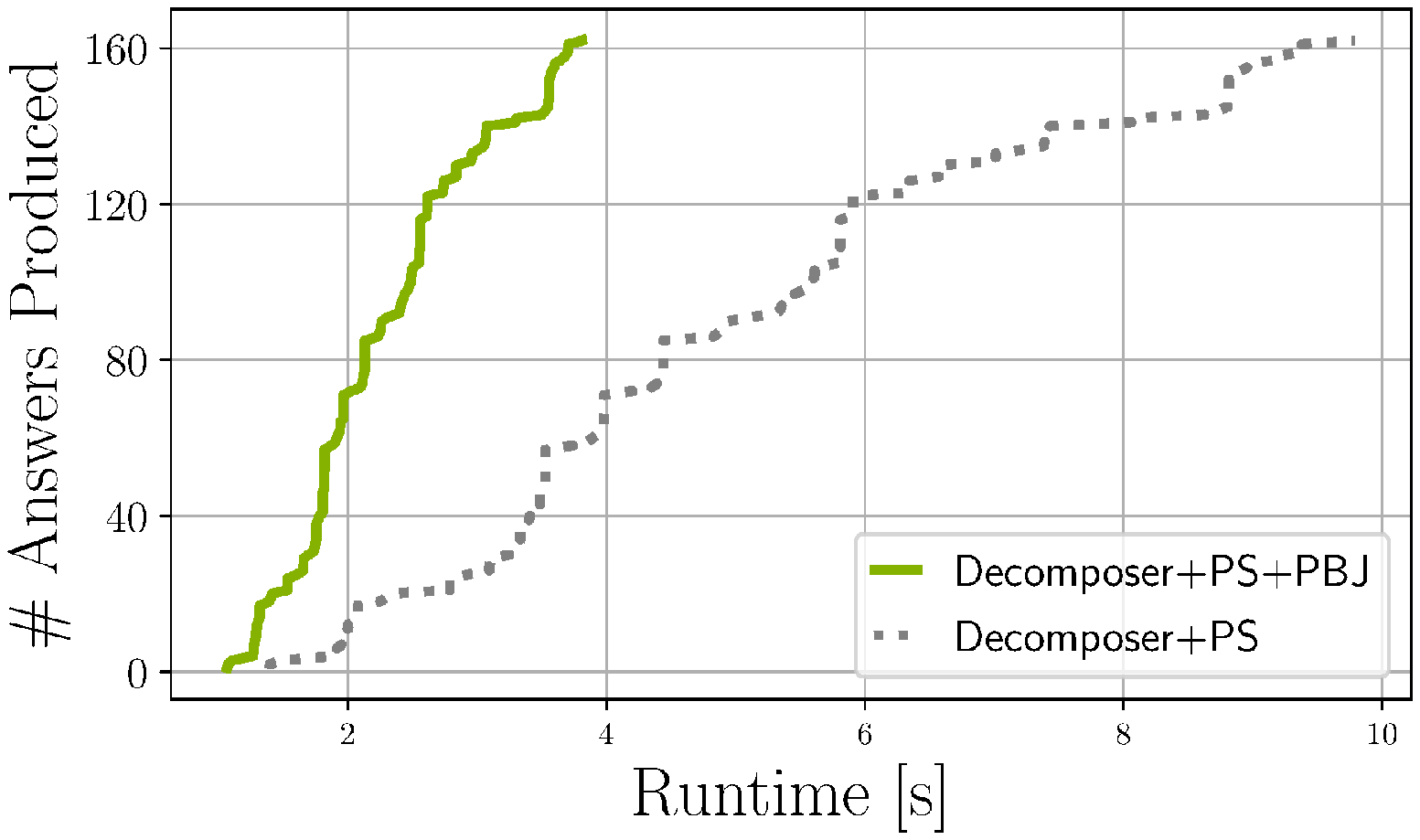}\label{fig:dief_ld3}}
\subfloat[\fedii: \textsf{LS6}]{\includegraphics[width=0.23\textwidth]{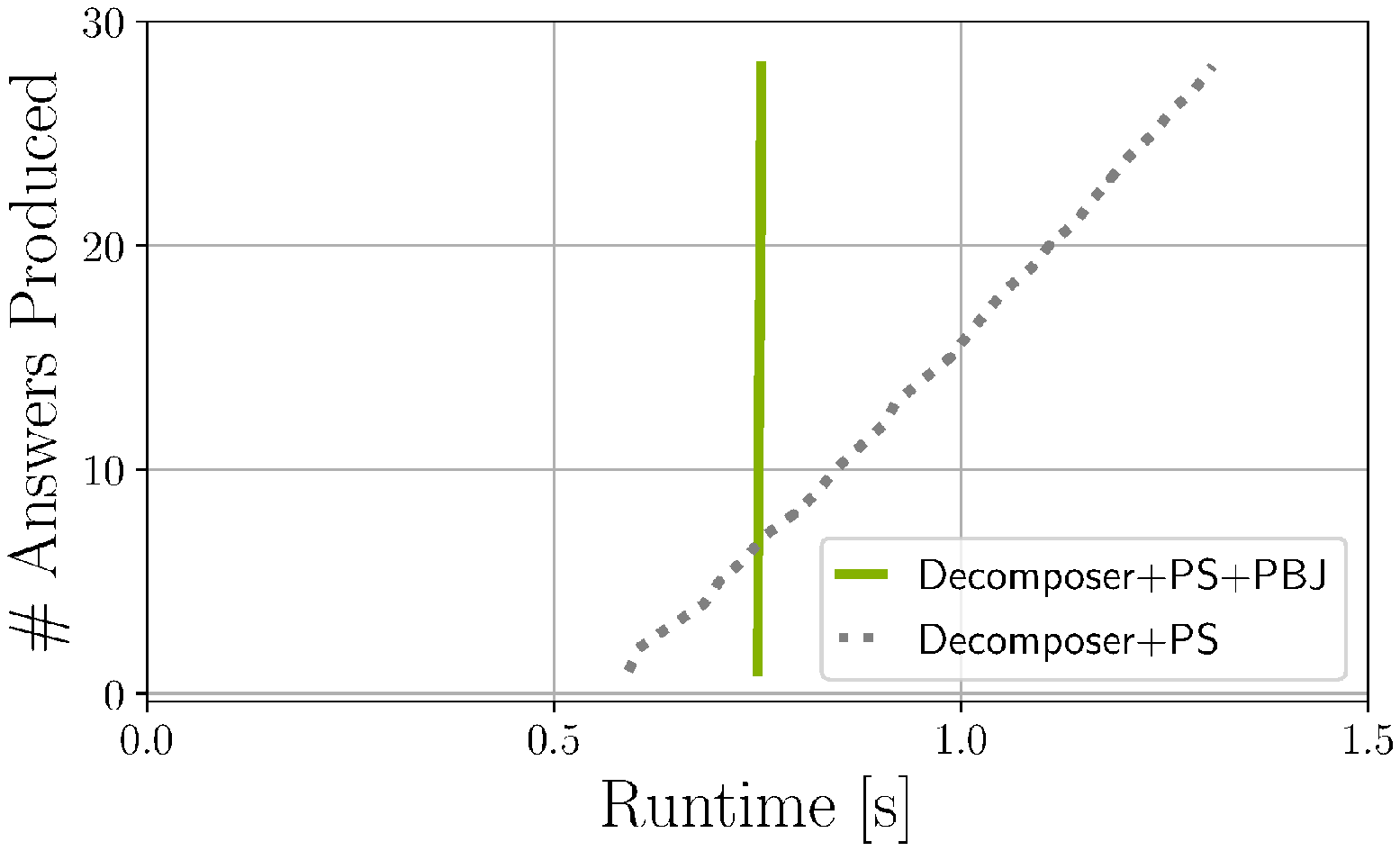}\label{fig:dief_ls6}}
\subfloat[\fedii: \textsf{LS8}]{\includegraphics[width=0.23\textwidth]{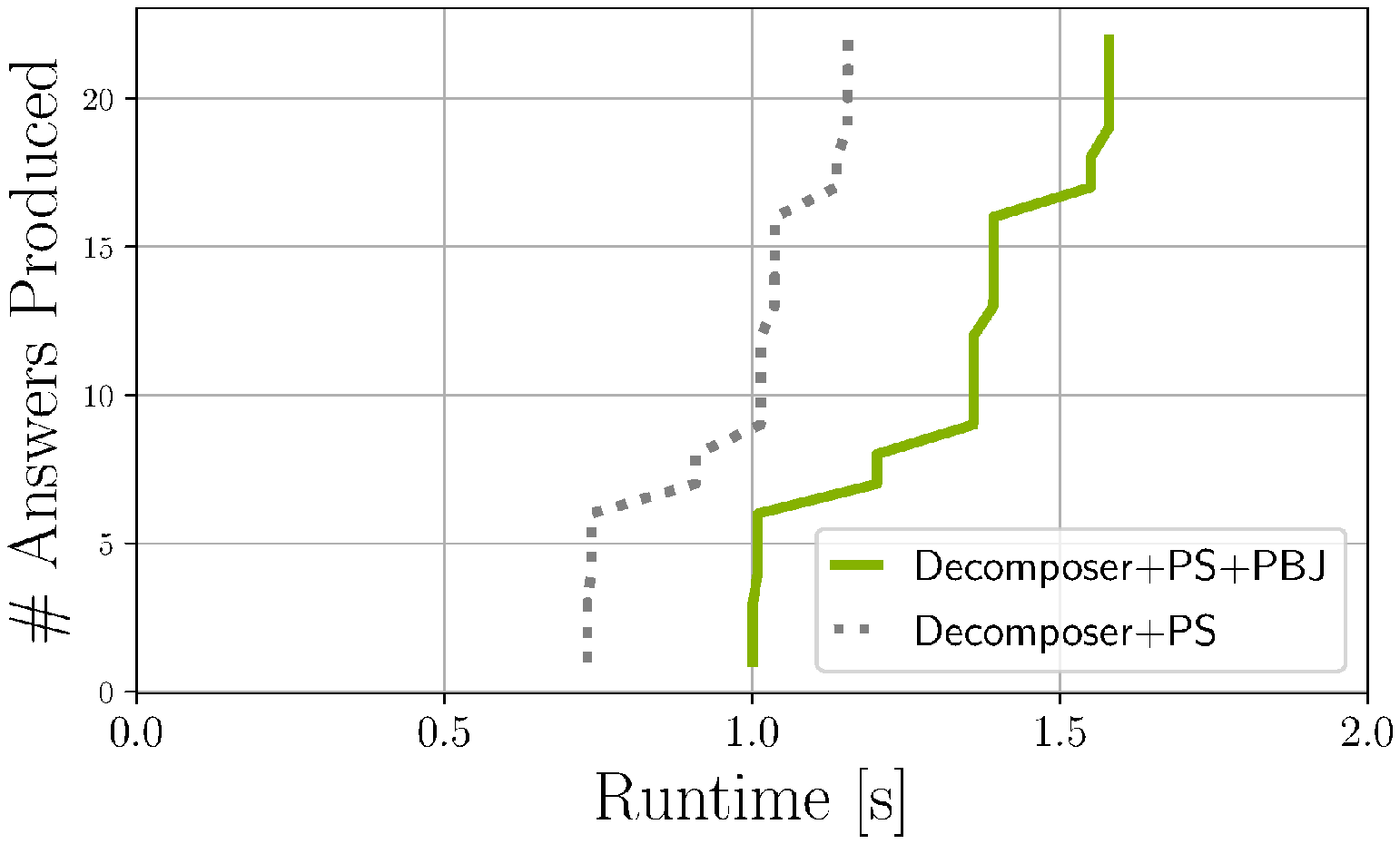}\label{fig:dief_ld8}}
\vspace{-0.7\baselineskip}
\caption{Example diefficiency plots for the approach with the PBJ (green) and without the PBJ (dotted).}
\label{fig:dieffiency}
\end{figure*}


\smallbreak
\noindent\textbf{Polymorphic Bind Join.}
The results in both \cref{fig:runtime_requests} and \cref{tab:evaluation_table} reveal that, in the two federations, adding the Polymorphic Bind Join (\textsf{Decomposer+PS+PBJ}) reduces the number of requests by more than 25\% and, as a consequence, reduces the overall runtimes. 
We investigate the diefficiency to better understand the impact of the PBJ.
In \cref{fig:dieffiency}, we show the diefficiency plots for four example queries.
The plot for \textsf{LS3} in \cref{fig:dief_ls3} shows that the performance of the PBJ is similar to a regular NLJ in case it cannot leverage the capabilities of the interfaces, e.g., 
\textsf{LS3} where only TPFs are contacted.
However, if the capabilities of the services can be leveraged, the PBJ allows for producing the results at a higher rate as shown for queries \textsf{LD3} (\cref{fig:dief_ld3}) and \textsf{LS6} (\cref{fig:dief_ls6}).
In the latter, all answers are produced at once.
Nonetheless, the semi-blocking nature of the PBJ can also have a detrimental effect on diefficiency and runtime as observed for query \textsf{LS8} in \cref{fig:dief_ld8}.
Here, the production of the answers is delayed because the block size of the inner relation (which consumes data from a brTPF server) is not reached until all results of the outer relation in the PBJ are produced. 
Future work could study an adaptive PBJ with variable block sizes, determined according to the expected number of tuples of the outer relation.

Summarizing our experimental evaluation, the results show the effectiveness of our interface-aware techniques for query decomposition, planning, and physical operators.
Furthermore, the results illustrate that our techniques can cope with heterogeneous federations that are composed of different combinations of interfaces.

\smallbreak
\noindent\textbf{Limitations.}
The central assumption of our framework is the access to high-level information about the federation (e.g., interfaces and relevant sources) while fine-grained statistics (e.g., data distributions) are not available. 
Therefore, the proposed framework components are limited to devise approximate solutions to the problems of query decomposition, planning, and execution. 
While our experimental results show substantial improvements in the FedBench benchmark, these improvements might not hold in other federations. 
Yet, our framework is a foundation for federated query processing in heterogeneous federations and the components can be refined in the case that additional statistics are available.
For instance, by weighting edges in the decomposition graph according to probability of sources contributing to the answers of a query.

\section{Related Work}
\label{sec:related_work}
Query processing over homogeneous federations of SPARQL endpoints has been broadly studied and existing approaches address different challenges.
For instance, \cite{DBLP:conf/semweb/SchwarteHHSS11,DBLP:journals/pvldb/AbdelazizMOAK17} leverage requests during runtime to obtain efficient query plans, while \cite{DBLP:conf/esws/SaleemN14,DBLP:conf/semweb/MontoyaSH17,DBLP:conf/esws/QuilitzL08, DBLP:conf/semweb/GorlitzS11,DBLP:conf/i-semantics/CharalambidisTK15} implement cost models that rely on pre-computed statistics, and \cite{DBLP:conf/semweb/AcostaVLCR11} focuses on runtime adaptivity. 
Furthermore, approaches that specifically study query decomposition have been proposed.
Vidal et al. \cite{DBLP:journals/tlsdkcs/VidalCAMP16} propose a formalization of the query decomposition problem in a way such that it can be mapped to the vertex coloring problem. 
Vidal et al. \cite{DBLP:journals/tlsdkcs/VidalCAMP16}  propose the heuristic \emph{Fed-DSATUR} to solve the problem. 
Similar, Endris et al. \cite{DBLP:conf/dexa/EndrisGLMVA17} formalize the query decomposition problem for federated SPARQL querying and present a decomposition approach that relies on RDF Molecule Templates, which represent metadata obtained by executing SPARQL queries over endpoints. 
Different from our work, these approaches assume all members in the federation to be SPARQL endpoints, and thus, the proposed solutions rely on their querying capabilities.

Additional Linked Data Fragment (LDF) interfaces and corresponding SPARQL clients have been proposed.
They range from less expressive interfaces, such as (Bindings-Restricted) Triple Pattern Fragments (\cite{DBLP:journals/corr/HartigA16}) \cite{DBLP:journals/ws/VerborghSHHVMHC16}, to more expressive interfaces such as SaGe \cite{DBLP:conf/www/MinierSM19} or smart-KG \cite{DBLP:conf/www/AzzamFABP20}. 
To study the expressiveness of LDF interfaces, Hartig et al. \cite{DBLP:conf/semweb/HartigLP17} propose the Linked Data Fragment Machines as a formal framework that includes client demand, server demand, and communication cost when executing queries over these interfaces.  
Similar to their work, we also formalize the concept of a server language to distinguish the capabilities of different interfaces in the federation. Yet, our work goes beyond individual interfaces and studies the problem of heterogeneous LDF federations. 

Lastly, a few approaches have addressed the problem of heterogeneous interfaces.  
Comunica \cite{DBLP:conf/semweb/TaelmanHSV18} is a client able to query heterogeneous LDF federations. 
But, in contrast to our work, Comunica does not support interface-aware query decomposition and handles the query execution on a triple pattern level, even if different interfaces are present.
Moreover, the physical join operators, such as the nested loop join, do not adapt to the different interfaces. 
In a recent paper, Cheng and Hartig \cite{cheng2020fedqpl} study query plans in heterogeneous federations. 
Similar to our work, they conceptualize different interfaces, and federation members implementing those interfaces.
They focus on a formal language for logical query plans over such federations but, in contrast to our work, they do not propose specific solutions to devise such plans and derive physical plans to be evaluated by an engine. 
Montoya et al. \cite{DBLP:conf/semweb/MontoyaAH18a} propose a client to query replicas of datasets via heterogeneous interfaces (brTPF server and SPARQL endpoints) to exploit their characteristics.
Different from our work, they focus on different interfaces for single datasets but do no investigate federated querying.

\section{Conclusion and Future Work}
\label{sec:conclusion}
We formalize the concept of federations of Linked Data Fragment services and present the challenges that querying approaches over heterogeneous federations face.
In particular, we present a theoretical framework and practical solutions for query decomposition, query planning, and physical operators tailored to heterogeneous LDF federations.
In our experimental study, we evaluated a prototypical implementation of our proposed solutions.
The results show a substantial improvement in performance achieved by devising interface-aware strategies to exploit the capabilities of TPF, brTPF, and SPARQL endpoints during federated query processing. 
Future work may focus on extending the proposed framework to other LDF interfaces and studying how state-of-the-art query decomposition, planning, and source pruning approaches from federated SPARQL engines can be applied to heterogeneous federations.

\bibliographystyle{ACM-Reference-Format}
\bibliography{acmart.bib}


\end{document}